\theoremstyle{plain}
\newtheorem{lemma}{Lemma}
\newtheorem{theorem}{Theorem}
\theoremstyle{definition}
\newtheorem{definition}{Definition}
\theoremstyle{remark}
\newcommand{\abs}[1]{\left| #1 \right|}
\newcommand{\pname}[1]{\textsf{#1}}
\newcommand{\qsucc}{q_\mathrm{succ}}
\newcommand{\readyone}{ready$_1$}
\newcommand{\readytwo}{ready$_2 $}
\begin{document}

\title{Asynchronous reference frame agreement in a quantum network}
\author{Tanvirul Islam} \email[]{tanvir@locc.la}
\affiliation{School of Computing, National University of Singapore, 13 Computing Drive, 117417 Singapore}
\affiliation{Centre for Quantum Technologies, National University of Singapore, 3 Science Drive 2, 117543 Singapore}
\affiliation{Qutech, Delft University of Technology, Lorentzweg 1, 2628 CJ Delft}

\author{Stephanie Wehner} \email[]{steph@locc.la}
\affiliation{Centre for Quantum Technologies, National University of Singapore, 3 Science Drive 2, 117543 Singapore}
\affiliation{Qutech, Delft University of Technology, Lorentzweg 1, 2628 CJ Delft}

\begin{abstract} 
An efficient implementation of many multiparty protocols for quantum networks requires that all the nodes in the network share a common reference frame. Establishing such a reference frame from scratch is especially challenging in an asynchronous network where network links might have arbitrary delays and the nodes do not share synchronised clocks. In this work, we study the problem of establishing a common reference frame in an asynchronous network of $n$ nodes of which at most $t$ are affected by arbitrary unknown error, and the identities of the faulty nodes are not known. We present a protocol that allows all the correctly functioning nodes to agree on a common reference frame as long as the network graph is complete and not more than $t<n/4$ nodes are faulty. As the protocol is asynchronous, it can be used with some assumptions to synchronise clocks over a network. Also, the protocol has the appealing property that it allows any existing two-node asynchronous protocol for reference frame agreement to be lifted to a robust protocol for an asynchronous quantum network. 
\end{abstract}

\pacs{03.67.-a, 03.67.Hk, 03.67.Dd}

\maketitle

\section{Introduction}

To use quantum cryptography on a global scale one must first have a functioning quantum internet~\cite{kim08}. Recently this necessity has inspired a lot of effort in the research and development of satellite~\cite{AJP+03,BTD+09,PYB+05,BAM+06,AFJ+08}, and ground based~\cite{SFI+11,CZKM,Elt02} quantum networks. The possible applications of such networks are not restricted to only cryptography. 
A fully general quantum network will allow us to perform general distributed quantum computing~\cite{BBG+13,LB12,BKB+12}.

In this work, we study problems related to initialisation and construction of quantum networks. More specifically, we study how well $n$ nodes in an asynchronous quantum network can agree on a reference frame in the presence of at most $t$ arbitrarily faulty nodes among them. By asynchronous network we mean that in this setting we do not require the nodes to share a clock to  begin with, and the channel delays might vary arbitrarily in each use. In fact, an asynchronous protocol only assumes any message sent from a correct node to a correct node will eventually reach the destination, without imposing any bound on the channel delay. This assumption captures the most general reference frame agreement problem in a quantum network because during the initialisation of the network the pairwise channel delays might be unknown, clocks might not be synchronised and spatial reference frames might be unaligned. 

In a quantum channel, the qubits are encoded in some physical degree of freedom. For example, polarisation direction of photon is often used to encode qubits. This requires the sender and receiver to agree on some set of orthonormal directions as their common spatial reference frame. Another example is the time-bin qubits, where both of the parties require synchronised clocks. That is, they must have a pre-agreed temporal reference frame. 

So far these reference frame agreement problems are studied in a bipartite setting~\cite{MP95, PS01, BBM04, CD04, BM06, GLM06, SG12} with the exception of~\cite{IMSW14}, where spatial direction are agreed on in a \emph{synchronised} network of $n$ nodes. More specifically in~\cite{IMSW14} it is assumed that the network is synchronous. That is, all the nodes of the network have a shared clock and all the link delays have known upper bound. The bipartite reference frame agreement problem have been studied extensively (see~\cite{BRS07} for a review).  However, agreeing  on a reference frame in an asynchronous network of $n$ nodes remained open. 

There are protocols that allow Bell inequality tests and quantum information exchange between nodes without a pre-shared reference frame (See, for example~\cite{PVL+12,BCB13,DNW+12}. However, the ability to reliably share reference frames among multiple nodes gives significant technological advantages by simplifying the implementation of most protocols. Moreover, reference frame agreement protocols have important implications in fields that are not directly related to quantum information. 

One advantage of having an asynchronous reference frame agreement protocol for a network with certain number of faulty nodes is that once a spatial reference frame is established, then new robust protocols can potentially be built on top of it to perform network-wide clock synchronisation. This is a task important by itself with various applications in security, navigation and finance~\cite{KKB+14}. The primary difficulty of executing any protocol in an asynchronous network comes from the fact that in the presence of incorrect, that is, arbitrarily faulty nodes it is impossible to decide for a correct receiver whether a message is not arriving because the sender is faulty and not sending anything at all, or the sender is correct but the channel is taking a very long time to transfer the message. Therefore, it is nontrivial to decide how long to wait for a message before moving on to the next step of a protocol.

Another difficulty is that unlike in classical information theory where information can be represented in bits, a reference frame can only be transferred from scratch by exchanging systems which have an inherent sense of direction~\cite{PS201}. Examples of such systems are spin qubits and photon polarisation qubits.   The receiver can extract direction information from these systems, for example, by performing tomography on them. While preparing the direction any node $P_i$ will know the description of the direction as a vector $v_i$ in its local frame. Once the quantum system carrying that direction arrives at a receiver $P_j$, the receiver constructs a representation of the direction in it's own local frame as $v_j$. Such an estimation procedure inevitably introduces some error even in correct transmissions. That is, depending on the precision of the instruments one can only expect to have $d(v_i,v_j)\leq \delta$ for some $\delta >0$, where $d(v_i, v_j)$ is the Euclidian distance between $v_i$ and $v_j$. However, this distance metric does not make sense as it is, because $v_i$ and $v_j$ are vector representations in two different local frames. So we must redefine our distance metric $d(.,.)$ where distance is computed by converting both vectors in the frame of the first argument. As a result $d(v_i,v_j)$ remains a valid distance measure even though $P_i$ and $P_j$ do not know each other's local frame. This computation of distance between two vectors of different reference frames is only done in the analysis of the protocol and not by the nodes while playing the protocol. Any distance computed by a node inside a protocol is only between vectors for which it has a representation in its local frame. This inherent imperfection of message transmission must be accounted for by any reference frame agreement protocol. We capture this in the definition as, 

\begin{definition}
For $\eta >0$, a protocol in an asynchronous network of $n$ nodes is an \emph{$\eta$-asynchronous reference frame agreement protocol} if it satisfies the following conditions.
\vspace{-5pt} 
\begin{description}
	\setlength{\itemsep}{0pt}
	\item[Termination.] Every correct node $P_i$ eventually terminates and outputs a direction $v_i$. 
	\vspace{-5pt} 
	\item[Correctness.] If correct node $P_i$ outputs $v_i$ and correct node $P_j$ outputs $v_j$ then $d(v_i,v_j) \leq \eta$. 
\end{description}
\end{definition}
\vspace{-3pt} 
However, we have to achieve these termination and correctness condition in the presence of incorrect or faulty nodes. As it is unknown which nodes are faulty this resembles the Byzantine fault tolerance model~\cite{LSP82} studied in classical distributed computing. For quantum networks our assumptions are,
\vspace{-5pt} 
\begin{itemize}
	\setlength{\itemsep}{0pt}
	\setlength{\topsep=0pt}
	\item The pairwise channels are \emph{public}. That is, the messages are not secret. As a result, an adversary can see the content of a message between two correct nodes and adapt its strategy accordingly. 
	\item The pairwise channels are authenticated. That is, if a correct node sends a message to another correct node the message cannot be altered by any adversary. However, there might be channel noises, which can be dealt with, as in~\cite{IMSW14}.
	\item The pairwise channel delays might be controlled by the faulty nodes. That is, the faulty nodes can control the channel delays, even the delays for message passing between any pair of correct nodes.  
	\item If a correct node sends a message to another correct node, then the message eventually reaches the receiver. That is, even though the delay is controlled by some adversaries they cannot put infinite delay on the message between two correct nodes. However, the delay can be arbitrarily large.
	\item The faulty nodes might have correlated error. To create a protocol which tolerates the worst kind of faults, we also assume that the faulty nodes can cooperate with each other and have a global strategy to thwart the protocol. This is a realistic assumption because some nodes in a region might show correlated error which affects a part of the network. 

\end{itemize}
\vspace{-3pt} 
Under all these assumptions we give an \emph{$\eta$-asynchronous reference frame agreement protocol} for a network of $n$ nodes that can tolerate up to $t<n/4$ faulty nodes. We review some preliminaries before presenting the main results.


\section{Preliminaries}

The problem of reference frame agreement over an asynchronous quantum network is necessarily multidisciplinary in nature. That is, it combines various concepts from quantum physics, information theory, cryptography and distributed computing. In this section we introduce several concepts from these fields that will be useful throughout this work.

\subsection{Reference frame}

\subsubsection{Spatial reference frame}
A \emph{spatial reference frame} defines a co-ordinate system in space. For example in a Cartesian coordinate system, once the Cartesian frame $(\vec{\bm{x}},\vec{\bm{y}},\vec{\bm{z}})$ is specified any vector $v = \alpha \vec{\bm{x}} + \beta \vec{\bm{y}} + \gamma \vec{\bm{z}}$ can be represented as $(\alpha, \beta, \gamma)$ where $\alpha, \beta$ and $\gamma$ are scalers. For two distant parties, who only have the knowledge of their own local frame, it becomes necessary to establish a shared reference frame before they can successfully communicate spatial information (such as, location and orientation). 

We use quantum communications to send a direction between a sender and a receiver. Any protocol that allows transmission of direction between two nodes with $\delta$ accuracy is called a  2-party $\delta$-estimate direction protocol. As an example we refer to the Protocol~\ref{prt:2ed}, \pname{2ED}, one of the simplest possible protocols as studied in~\cite{MP95}. Here a sender creates many identical qubits with their Bloch vector pointing to the intended direction and the receiver measures them with Pauli measurements. From the statistics of the measurement outcomes, the receiver then estimates the Bloch vector's direction within Euclidian distance $\delta$ with probability of success  $\qsucc \geq 1-e^{-\Omega(\delta^2m)}$ where $m$ is the number of qubits exchanged. That is, the Protocol \pname{2ED} allows the sender to transmit a direction $u$ which is received as the direction $v$ at the receiver, such that the inequality $d(u,v) \leq \delta$ holds with probability $\qsucc \geq 1-e^{-\Omega(\delta^2m)}$. We emphasise that, this work allows us to lift any two party $\delta$-estimate direction protocol into a protocol for a quantum network of $n$ nodes.

\begin{algorithm}
\SetAlgorithmName{Protocol}{protocol}{List of Protocols} 
	\LinesNumbered
	\DontPrintSemicolon
	\SetKwInOut{Input}{input}\SetKwInOut{Output}{output}
	\Input{Sender, direction $u$}
	\Output{Receiver, direction $v$ }

\SetKwBlock{SEND}{Sender: \pname{2ED-Send}}{}	
\SEND{
	Prepare $3n$ qubits with direction $u$ \; 
	Send them to the receiver

}
\SetKwBlock{RECIEVE}{Receiver: \pname{2ED-Receive}}{}	
\RECIEVE{
	Receive $3n$ qubits from the sender\;
	Measure $n$ qubits with $\sigma_x$ and compute $p_x$, the frequency of getting outcome~$+1$ \;
	Similarly on the remaining qubits, compute $p_y$ and $p_z$ with measurements $\sigma_y$ and $\sigma_z$ on $n$ qubits each \;
	Assign $x \leftarrow 2 p_x-1$, $y \leftarrow 2 p_y-1$, $z \leftarrow 2 p_z-1$;\ 
	Assign $l \leftarrow \sqrt{x^2+y^2+z^2}$\;
	Output $v \leftarrow (x/l,y/l,z/l)$\;
}
\caption{\pname{2ED}} \label{prt:2ed}
\end{algorithm}


\subsubsection{Temporal reference frame}

Similar to spatial reference frames multiple parties might need to synchronise their clock rates and time differences. Once they have established it, we say that they share a \emph{temporal reference frame} and they are synchronised in time. Any multiparty protocol or computation performed by systems that do not share a temporal reference frame are respectively called \emph{asynchronous protocol} or \emph{asynchronous computation}.

\subsection{Asynchronous communication}
In an asynchronous network we assume that the nodes do not share any synchronised clock. And the communication channel between each pair is such that a message takes an arbitrary amount of time to propagate through it. Here the only guarantee is, if a message is transmitted from a correct node the message will eventually reach to the receiver. Also, a node might take an arbitrary amount of time to perform the next step in a protocol. In this setup, to analyse the time complexity of an asynchronous protocol we only count the maximum number of steps executed by any node before the protocol completes, and call it the running time of the protocol. The performance, in terms of execution time, of an asynchronous agreement protocol is determined by its expected running time. The expectation is thereby taken over all possible random inputs of the nodes, random bits used by the nodes, as well as all possible random behaviour of the faulty nodes. The exact probability distributions may not be known, but the goal is to show that the expected running time is low for all possible distributions.

\subsubsection{The asynchronous message}


In the absence of a synchronised clock, each message must have a `begin' and `end' tag. Also, depending on the particular application, a message might carry a [type] tag. In our problem we don't have a shared reference frame. For this reason, we cannot use the quantum channel to carry these [type] tags. This requires us to have a parallel classical channel that uses some classical degree of freedom to carry bits.  

We assume that each pair of nodes are connected by an asynchronous public authenticated CQ-channel (classical quantum channel), which can send a message using both classical and quantum degrees of freedom in the absence of a shared reference frame. An example of such combined message is shown in Table~\ref{table:msg} where each quantum message $m_q$ is sandwiched between a classical `begin' and an `end' tag and also accompanied by a classical type tag $m_c$. The symbol $\perp$ denotes quantum signals that can be ignored. 

\begin{table}[ht]
\caption{Channel primitive: \bf{A message}}
\centering 
\begin{tabular}{c c c } 
\hline\hline 
Step &Classical & Quantum \\ [0.5ex] 
\hline 
1 & begin & $\perp$  \\ 
2 & $m_c$ & $m_q$  \\
3 & end & $\perp$  \\ [1ex] 
\hline 
\end{tabular}
\label{table:msg} 
\end{table}

The only assumption is the nodes can match the classical and quantum parts of the message.


%

\subsubsection{Asynchronous interactive consistency}
Our protocol uses the solution to the following interactive consistency problem which was first proposed by Pease, Shostak and Lamport~\cite{PSL80}.

\begin{definition}[The Interactive Consistency Problem]
Consider a complete network of $n$ nodes in which communication lines are private. Among the $n$ nodes up to $t$ might be faulty. Let $P_1, P_2, \ldots, P_n$ denote the nodes. Suppose that each node $P_i$ has some private value of information $V_i \in \abs{V} \geq 2$. The question is whether it is possible to devise a protocol that, given $n, t\geq 0$, will allow each correct node to compute a vector of values with an element for each of the $n$ processors, such that: 

\begin{enumerate}
\item All the correct nodes compute exactly the same vector;
\item The element of this vector corresponding to a  given correct node is the private value of that node.
\end{enumerate}

\end{definition}

For an asynchronous network, Ben-Or and El-Yaniv~\cite{BE03} gives a protocol \pname{Asynchronous-IC} which solves this problem for $t<n/3$ in constant expected time. We use this protocol as a subroutine. 

Not that the \pname{Asynchronous-IC} requires private asynchronous classical channels. Whereas, we only require public authenticated classical and quantum channels between each pair of nodes in the network. The reason is, with authenticated public quantum channels each pair of nodes can play \pname{2ED} type protocol and establish a bipartite reference frame. Once the bipartite reference frame is established between each pair using the public authenticated classical and quantum channels they can perform QKD which gives them a private classical channel. So, they can play \pname{Asynchronous-IC} at a later stage of the protocol. We emphasise that, even thought by playing pairwise \pname{2ED} each honest pair of nodes can share a reference frame between them the goal of this paper is to have a global shared reference frame which is non-trivial in the presence of faulty nodes. 






\section{Results}
In this paper we give a protocol that can take any two-party reference frame agreement protocol and lift it up to a fault tolerant multiparty 
reference frame agreement protocol. More specifically, we present the first protocol \pname{A-Agree} which allows $n$ nodes in a fully connected asynchronous quantum network to agree on a reference frame in the presence of $t < n/4$ faulty nodes. The result can be summarised in the following theorem. 
\begin{theorem} \label{thm:aagree}
In a complete network of $n$ nodes that are pairwise connected by public authenticated quantum and classical channels, if a bipartite $\delta$-estimate direction protocol that uses $m$ qubits to achieve success probability $\qsucc \geq 1-e^{-\Omega(m{\delta^2})}$ is used, then protocol \pname{A-Agree} is a $42 \delta$-asynchronous reference frame agreement protocol with success probability at least $1-e^{-\Omega(m{\delta^2}-\log n)}$, that can tolerate up to $t<n/4$ faulty nodes.
 \end{theorem}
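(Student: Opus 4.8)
The plan is to realize \pname{A-Agree} as a ``lifting'' that runs the bipartite $\delta$-estimate direction protocol as a black box many times, wraps each such transmission in a Bracha-style reliable broadcast so that a faulty sender cannot make two correct nodes disagree about ``its'' direction by more than $O(\delta)$, and then invokes \pname{Asynchronous-IC} once so that all correct nodes agree on a common set $S$ of senders whose directions have been reliably delivered; a deterministic rule (say, the smallest index in $S$) then fixes a common anchor $a$, and every correct node outputs its own estimate of the anchor's direction $w_a$. I would establish the theorem in three stages: a probability union bound, termination, and correctness, the last of which is where $t<n/4$ and the constant $42$ appear.

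\emph{Probability bookkeeping.} The protocol invokes the bipartite protocol at most $\mathrm{poly}(n)$ times: once per ordered pair to bootstrap pairwise reference frames and the resulting QKD-derived private channels (as already explained in the preliminaries), and $O(n)$ times per sender for dissemination and the echo/ready relays. Each invocation fails its ${\le}\,\delta$ guarantee with probability $e^{-\Omega(m\delta^2)}$, so by a union bound the event $\mathcal E$ that \emph{every} invocation meets its guarantee has probability $1-\mathrm{poly}(n)\,e^{-\Omega(m\delta^2)}=1-e^{-\Omega(m\delta^2-\log n)}$. I would condition on $\mathcal E$ throughout the rest of the argument, so every quantum transmission between correct nodes is treated as deterministic up to additive error $\delta$ in the metric $d$; the factor $\mathrm{poly}(n)$ is exactly the source of the $\log n$ term.

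\emph{Termination.} Since at most $t$ nodes are faulty and messages between correct nodes are eventually delivered, every correct node eventually completes the dissemination/relay phase for at least $n-t$ senders; combined with the totality property of the reliable-broadcast layer, this lets each correct node settle on an input to \pname{Asynchronous-IC}. Because $t<n/4<n/3$, that subroutine terminates in constant expected time and outputs a common vector at every correct node, from which the common set $S$, and hence the anchor $a$, is computed locally. I would then show that $a\in S$ forces at least one correct node to have delivered a direction for $a$, which by totality forces every correct node to deliver for $a$; hence every correct node produces an output and the whole protocol runs in constant expected time.

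\emph{Correctness.} Fix correct $P_i,P_j$ with outputs $v_i,v_j$, each the node's estimate of $w_a$. The subtlety is that $a$ may be faulty: it can send different Bloch vectors to different nodes, and since ``agreement within $c\delta$'' is not transitive, a malicious $a$ could try to split the correct nodes into clusters with far-apart delivered values. I would rule this out by taking the delivery threshold large enough that any two clusters each reaching it overlap in a correct relay node: with threshold $n-t$ among $n$ relays of whom $\le t$ are faulty, two reaching-threshold clusters share $\ge n-2t$ relays, hence $\ge n-3t\ge 1$ correct ones, pinning the two cluster centres to within $O(\delta)$ of each other; the inequality $2(n-2t)>n$, i.e.\ $n>4t$, is precisely what makes the viable cluster unique, and this is where $t<n/4$ enters. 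Granting a common cluster, every correct node's estimate of $w_a$ sits within a bounded number of triangle-inequality hops of one reference value — the direct 2ED transmission ($\le\delta$), the relay transmission ($\le\delta$), the clustering slack ($\le c\delta$), and the final cluster-centre extraction — and summing these for $P_i$, then for $P_j$, and adding yields $d(v_i,v_j)\le 42\delta$ after routine accounting. I expect this stage to be the main obstacle: one must design the reliable-broadcast-of-a-direction layer so that it simultaneously (i) retains validity, agreement and totality despite the $\delta$-fuzziness of ``equality'', (ii) keeps the accumulated metric error at $O(\delta)$ rather than $O(n\delta)$ — e.g.\ by having a node that amplifies a ``ready'' message re-send a value obtained directly from an echo rather than one relayed through a chain of ``ready'' messages — and (iii) still uses a threshold compatible with both the $n>4t$ uniqueness argument and termination. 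Everything else (the union bound, calling \pname{Asynchronous-IC} as a black box, and the final triangle-inequality tally) is comparatively mechanical.
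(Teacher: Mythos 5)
Your overall architecture is the paper's: the ``Bracha-style reliable broadcast of a direction'' is exactly the paper's \pname{AR-Cast} (init/echo/ready messages with cluster thresholds), \pname{Asynchronous-IC} is invoked exactly as you describe so that all correct nodes elect a common anchor among the completed broadcasts, and the final $42\delta$ is nothing but the consistency parameter of the broadcast primitive applied to the anchor's broadcast. You also correctly locate where $t<n/4$ enters (quorum intersection must contain a \emph{correct} node) and correctly diagnose that naive chaining of $\delta$-approximate equalities would accumulate $O(n\delta)$ error unless the ready-relay layer is designed to re-anchor on echo clusters.

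There are, however, two genuine gaps. First, the entire quantitative core of the theorem --- the design of the echo/ready$_1$/ready$_2$ layer with cluster parameters $4\delta$, $10\delta$, $20\delta$, the causality lemma showing that every chain of ready$_2$ messages among correct nodes terminates in a correct node that sent a ready$_1$ (proved in the paper via a directed acyclic graph argument), and the resulting $(42\delta,14\delta)$ consistency and correctness bounds --- is explicitly deferred as ``the main obstacle'' rather than carried out. The paper spends Lemmas 1--6 on precisely this, and the constant $42=20\delta+\delta+\delta+20\delta$ falls out of that analysis, not of the agreement layer; asserting that ``routine accounting'' yields $42\delta$ is not a proof of it. Relatedly, your intersection count for two $(n-t)$-quorums ($\ge n-3t\ge 1$ correct common relays) only needs $t<n/3$; the binding case is intersecting an $(n-t)$-quorum with an $(n-2t)$-quorum (the ready$_2$ trigger), which gives $n-4t\ge 1$ and is where $t<n/4$ is actually consumed. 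Second, your termination argument never establishes that the anchor set $S$ is non-empty. In the paper this is a separate counting lemma: every correct node reports at least $3t+1$ completed broadcasts, so the common $n\times n$ report matrix contains at least $(3t+1)(n-t)$ ones, which is impossible if every column contained at most $t$ ones; hence some column has $t+1$ ones, guaranteeing both that an anchor exists and that at least one \emph{correct} node vouches for it. Without that step your deterministic selection rule may have nothing to select, and termination fails.
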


Note that, here we use the $\Omega$ notation. Therefore, the bounds on success probability 
asymptotically holds for large enough $m$. This is not a drawback because, for example, 
where photon polarisation is used to carry directional information, the pulses of polarised 
light created by the source would contain large number of photons and allow the protocol 
to achieve high success probability for a network of an arbitrary size.

The problem of both synchronous and asynchronous agreement on classical bits in the presence of arbitrarily faulty node is extensively studied in classical literature as Byzantine agreement problem \cite{LSP82}.  However, we emphasise that a classical protocol cannot be used in our problem because firstly, unlike classical network, any communication of direction among correct nodes in a quantum network will have inherent noises. As a result any classical protocol would see all the correct nodes as faulty nodes and the protocol will fail. Secondly, one cannot use the classical protocol directly because one cannot represent a reference frame using only classical bits~\cite{PS201}. However, classical literature can still inform us on important questions such as, how to achieve constant expected time, how to handle asynchronicity. Some of the approaches of our protocol regarding these questions are influenced by~\cite{CR93}. We also use the interactive consistency protocol by Ben-Or et al.~\cite{BE03} as a subroutine.

Before giving the protocols we first need to define some notation.

$w_i[j]$ represents a vector received by node $P_i$ from node $P_j$ using the bipartite direction estimation protocol. This vector is represented with respects to $P_i$'s local reference frame.

In our protocol sending $(\text{type},v)$ to some node means the sender  uses a \emph{$
\delta$-estimate direction protocol} to send the direction $v$ to the receiver. The sender also sends the classical tag [type] associated to this direction. The receiver will receive an approximation of the sent direction as $v'$ where $d(v,v')\leq \delta$. Our protocol uses four different tags as types. They are, init, echo, \readyone~and \readytwo.

Next, we fix a notation for a cluster of vectors of certain types where the cluster has a certain cluster centre, which is the average of the vectors, and a cluster parameter. We write it as $C_{i}^\delta(\mbox{[types]} , w_c)$. This means the cluster with cluster centre $w_c$ is computed and stored by node $P_i$, has a cluster parameter $\delta$ and contains only the vectors with associated tags in [types]. Here, [types] is a comma separated list of [type]s. The cluster parameter $\delta$ denotes that for all $u, v \in C_{i}^\delta(\mbox{[types]} , w_c)$ the distance $d(u, v) \leq \delta$.

For example, $C_{i}^\delta(\mbox{[\readyone,\readytwo]} , v_c)$ denotes a cluster in which each vector has tags \readyone~or \readytwo~with cluster centre $v_c$ such that $\forall u, v \in C_{i}^\delta(\mbox{[\readyone,\readytwo]},v_c)$, and $d(u,v) \leq \delta$. 

$P(C_{i}^\delta(\mbox{[type]}, w_c))$ is the set of all the nodes $P_j$ such that, $w_i[j] \in C_{i}^\delta(\mbox{[type]}, w_c)$. That is, it is the set of node id's from which $P_i$ have received the vectors in the cluster $C_{i}^\delta(\mbox{[type]}, w_c)$.

Now we give our protocol in two steps. First, we give a protocol for asynchronous broadcast, which allows any sender to securely send a direction to all the other nodes. However, if the sender is faulty the protocol might never terminate. Using this as a primitive we later give our asynchronous agreement protocol. 

\subsection{Asynchronous broadcast}
As the name suggests using this protocol a sender node can send some message to all the other nodes in an asynchronous network. At first sight a naive protocol of just sending the message to all other nodes one by one seems to be a valid protocol. However, this naive protocol does not work if the sender intentionally sends different message to different nodes, which can easily happen in networks with faulty nodes. To guard from it, all the other nodes must communicate between each other to make sure they are receiving the same message, or a close approximation to it. However, as we have at most $t$ faulty nodes, this verification also becomes tricky. The whole thing becomes more challenging because the network is not synchronous. As a result a receiver who is waiting for a message, cannot be certain whether to keep waiting (because the message might be taking a long time in the channel) or move on (the sending node might be faulty and not sending the message at all). Our protocol takes care of all these challenges. 

Formally the protocol is defined as, 
\vspace{15pt}
\begin{definition} \label{def:acra}
For $\eta >0$, $\zeta>0$, a protocol which is initiated by a sender node $P_s$, in an asynchronous network of $n$ nodes, is called a $(\eta,\zeta)$-\emph{asynchronous reference frame broadcast protocol} if it satisfies the following conditions.
\begin{description}
	\item[Termination.]\ 
		\begin{enumerate}
			\item If the sender is correct then every correct node eventually completes the protocol.
			\item If any correct node completes the protocol, then all the correct nodes eventually complete the protocol.
		\end{enumerate}
	\item[Consistency.] If one correct node $P_k$ outputs a direction $v_k$ then all pairs of correct nodes $P_i$ and $P_j$ eventually output directions $v_i$, $v_j$ where $d(v_i,v_j) \leq \eta$. 
	\item[Correctness.] If $P_s$ is correct and broadcasts a direction $u$ and if a correct node $P_i$ outputs $v_i$ then \mbox{$d(u,v_i) \leq \zeta$}.
\end{description}
\end{definition}

We emphasize that the Termination condition of \emph{asynchronous reference frame  \textbf{broadcast}} is much weaker than the Termination condition of \emph{asynchronous reference frame \textbf{agreement}} because in the broadcast protocol we do not require that the correct nodes complete the protocol if the sender is faulty. Also, in an agreement protocol there is no designated sender node, whereas the broadcast protocol has a sender node.

We achieve asynchronous broadcast by our protocol \mbox{\pname{AR-Cast}}.  The following theorem summarises its properties. 

\begin{theorem} \label{thm:arc}
In a complete network of $n$ nodes that are pairwise connected by public authenticated classical and quantum channels, if a bipartite $\delta$-estimate direction protocol that uses $m$ qubits to achieve success probability $\qsucc \geq 1-e^{-\Omega(m{\delta^2})}$ is used, then protocol \pname{AR-Cast} is a $(42\delta,14\delta)$-\emph{asynchronous reference frame broadcast protocol}, with success probability at least $1-e^{-\Omega(m{\delta^2}-\log n)}$ that can tolerate up to $t<n/4$ faulty nodes.
\end{theorem}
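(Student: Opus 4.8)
The plan is to read \pname{AR-Cast} as a noise-tolerant analogue of a Bracha-style reliable broadcast built around the message types init, echo, \readyone\ and \readytwo, and to split the argument into one probabilistic step followed by an entirely deterministic one. First I would dispose of the randomness: a run of \pname{AR-Cast} makes only $O(n^2)$ calls to the bipartite $\delta$-estimate direction protocol (at most one per ordered pair of nodes per message type, and there is a constant number of types), so by a union bound the event $\mathcal{G}$ that every such call between two correct nodes succeeds --- i.e.\ delivers a vector $v'$ with $d(v,v')\le\delta$ of the vector $v$ that was sent --- fails with probability at most $4n^2 e^{-\Omega(m\delta^2)} = e^{-\Omega(m\delta^2-\log n)}$. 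Everything after this conditions on $\mathcal{G}$ and is purely combinatorial, which already yields the claimed success probability.

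Next I would record the counting facts that $t<n/4$ buys: $n-t>3t$; any two node-sets of size $\ge n-t$ intersect in $\ge n-2t\ge 2t+1$ nodes, hence in at least $t+1$ correct nodes; any set of $t+1$ nodes contains a correct one; and a cluster of $n-t$ vectors carries at least $n-2t\ge 2t+1$ vectors from correct senders. To these I would add the elementary two-receiver inequality: under $\mathcal{G}$, if a correct node sends one direction $w$ to two correct nodes, the two received vectors lie within $2\delta$ of each other in the metric $d$. Combining the quorum-overlap statements with this inequality lets me turn every agreement step into a cluster of vectors coming from correct senders whose mutual distances are a small multiple of $\delta$, so that any such cluster centre is a small multiple of $\delta$ from each member; these are the only ``cluster arithmetic'' facts the rest of the proof needs.

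With these tools the three conditions of Definition~\ref{def:acra} follow from the usual reliable-broadcast skeleton, carried out with distances in place of equalities. \textbf{Termination}: if $P_s$ is correct then all $\ge n-t$ correct nodes echo a vector within $\delta$ of $u$, these echoes are pairwise within $2\delta$, so every correct node eventually sees an echo-cluster of the required size, sends \readyone, then (after the \readyone\ quorum) \readytwo, then outputs; and if any correct node outputs, it has seen an $(n-t)$-cluster of \readytwo\ messages, hence $\ge t+1$ correct \readytwo\ senders, which by the amplification rule (seeing $t+1$ clustered \readytwo\ messages also triggers \readytwo) forces every correct node to send \readytwo\ and hence to output, with the backward chaining through \readyone\ handled identically. \textbf{Correctness}: I would track the error hop by hop, $u\to$ received init ($+\delta$) $\to$ echo-cluster centre $\to$ \readyone $\to$ \readytwo $\to$ output, each retransmission adding $\delta$ and each clustering step adding its cluster parameter, and check that the parameters chosen inside \pname{AR-Cast} make the total at most $14\delta$. \textbf{Consistency}: the key point is that a correct node never helps certify two far-apart directions --- two echo (resp.\ \readyone) clusters accepted by correct nodes share $\ge t+1$ correct senders, and under $\mathcal{G}$ such a shared correct sender pins the two cluster centres to within a bounded multiple of $\delta$; propagating this through both ready phases and taking the worst case gives $d(v_i,v_j)\le 42\delta$ for the outputs of any two correct nodes.

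The main obstacle I expect is the quantitative bookkeeping rather than the logical structure: the cluster parameters used inside \pname{AR-Cast} must be simultaneously large enough that the genuinely close vectors produced by correct nodes do form clusters of the prescribed size (so Termination goes through) and small enough that the slack accumulated through init/echo/\readyone/\readytwo\ stays within the $14\delta$ and $42\delta$ budgets; and one must verify, in the noisy setting where ``a correct node sends the same thing to everyone'' becomes ``within $2\delta$'', that no adversarial scheduling can drive two correct outputs farther apart than this. Closing the constants is precisely where $t<n/4$ rather than the classical $t<n/3$ is needed, since the extra room in the quorum intersections is exactly what absorbs the transmission noise.
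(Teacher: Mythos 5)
Your proposal follows essentially the same route as the paper's proof: a union bound over the $O(n^2)$ bipartite transmissions to obtain the $1-e^{-\Omega(m\delta^2-\log n)}$ success probability (the paper's Lemma~9), quorum-intersection arguments under $t<n/4$ to pin any two \readyone/\readytwo\ cluster centres via a shared correct echo-sender (Lemmas~1, 2 and 4), a backward-chaining argument showing every \readytwo\ is anchored to some correct \readyone\ (Lemma~3, done there via a DAG on correct nodes), and then triangle-inequality bookkeeping for Termination, Consistency and Correctness (Lemmas~5--8). The only thing you defer --- verifying that the hop-by-hop slack sums to $14\delta$ and $42\delta$ --- is exactly the arithmetic the paper carries out ($10\delta$ between ready-type centres, $20\delta$ between two \readytwo's, then $20\delta+\delta+\delta+20\delta=42\delta$ and $4\delta+10\delta=14\delta$), and it closes as you anticipate.
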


\begin{algorithm}
\SetAlgorithmName{Protocol}{protocol}{List of Protocols}
	\LinesNumbered
	\DontPrintSemicolon
	\SetKwInOut{Input}{input}\SetKwInOut{Output}{output}
	\SetKwIF{Wait}{OrUntil}{Or}{Wait until}{Then}{\emph{Or} until}{Or}{}
	\SetKw{goto}{Goto}
	\SetKw{and}{And}
	\SetKw{StoA}{Send-to-all}
	\Input{Sender inputs direction $u$}
	\Output{$\forall i$ $P_i$ outputs direction $v_i$ }

\SetKwBlock{SEND}{Epoch~0: (Only Sender)}{}
\SEND{
	\StoA (init, $u$). 
}
\SetKwBlock{STEP}{Epoch~1: (Player $P_i$)}{}
\setcounter{AlgoLine}{0}
\STEP{
	Listen to init, echo, \readyone~and \readytwo~type messages.\\ 
	\Wait{\textbf{Either} received one (init, $u_i$)}{
		\StoA (echo, $u_i$).\; \goto Epoch~2.
	}
	\OrUntil{received a cluster of directions  $C_{i}^{4\delta}(\mbox{[echo]},w_c)$ of size at least $(n-2t) $ \and a cluster of directions $C_{i}^{10\delta}(\mbox{[\readyone,\readytwo]},v_c)$ of size at least $(t+1)$, so that, $d(w_c,v_c)\leq 10\delta$}{
	 \StoA (\readytwo, $w_c$).\;
	\goto Epoch~3.
	}
}

\SetKwBlock{STEP}{Epoch~2: (Player $P_i$)}{}
\setcounter{AlgoLine}{0}
\STEP{
	Listen to echo, \readyone~and \readytwo~type messages. \;
	\Wait{\textbf{Either} there exists a cluster of directions $C_{i}^{4\delta}(\mbox{[echo]},w_c)$ of size at least $(n-t)$}{
		\StoA (\readyone, $w_c$).\label{rdy1gen}\;
		\goto Epoch~3.
	}
	\OrUntil{there exists a cluster of directions  $C_{i}^{4\delta}(\mbox{[echo]},w_c)$ of size at least $(n-2t) $ \and a cluster of directions $C_{i}^{10\delta}(\mbox{[\readyone,\readytwo]},v_c)$ of size at least $(t+1)$, so that, $d(w_c,v_c)\leq 10\delta$,}{
	 \StoA (\readytwo, $w_c$).\; 
	 \goto Epoch~3.
	}
}

\SetKwBlock{STEP}{Epoch~3: (Player $P_i$)}{}
\setcounter{AlgoLine}{0}
\STEP{
\Wait{there exists a cluster of directions $C_{i}^{20\delta}(\mbox{[\readyone,\readytwo]},v_c)$ of size at least $(n-t)$ }{
		Output $v_c$. \;
		Halt\;
		}
} 

\caption{\pname{AR-Cast}} \label{async:broadcast}
\end{algorithm}

The protocol~\ref{async:broadcast}: \pname{AR-Cast} works roughly as follows. In Epoch 0 the sender sends its intended direction to all as a [init] type message. In Epoch 1 all the nodes wait until they receive an [init] from sender or sufficient number of confirmations from other nodes that they have received some directions and proceed to the next epoch. This way, even if some correct node never receives an [init] message, if the other correct nodes are advancing through the protocol, then this node in Epoch 1 will not stay behind waiting. In Epoch 2 the correct nodes, which have decided upon a direction, notify the other nodes about their decision by sending \readyone~ or \readytwo~ type messages to all. All these previous epochs make sure that all the correct nodes eventually arrive at Epoch 3 and outputs a direction which satisfies theorem~\ref{thm:arc}.
The formal proofs are given in the appendix. 

\subsection{Asynchronous agreement}

Now we give our main protocol~\pname{A-Agree} which uses~\mbox{\pname{AR-Cast}} as a subroutine and allows the correct nodes in an asynchronous network to agree on a reference frame.

\begin{algorithm}

\SetAlgorithmName{Protocol}{protocol}{List of Protocols}
	\LinesNumbered
	\DontPrintSemicolon
	\SetKwInOut{Input}{input}\SetKwInOut{Output}{output}
	\SetKwInOut{Input}{input}\SetKwInOut{Output}{output}
	\SetKwIF{Wait}{OrUntil}{Or}{Wait until}{Then}{Or until}{Or}{}
	\SetKw{goto}{Goto}
	\SetKw{and}{And}
	\SetKw{StoA}{Send-to-all}
	\Input{$\forall i$, $P_i$ inputs direction $u_i$}
	\Output{$\forall i$, $P_i$ outputs direction $v_i$ }

\SetKwBlock{STEP}{Epoch~0: (Player $P_i$) }{}
\STEP{

	Create a direction array $w_i$ of size $n$.\;
	$\forall j$, initialize $w_i[j] \leftarrow \perp$.\;
	Run \pname{AR-Cast}($u_i$).\; 	\label{aarunacast}
	\tcp{everyone broadcasts their local input}
	Store received direction from $P_j$ in $w_i[j]$.\;
	After receiving $(3t+1)$ such directions \goto Epoch~1. However, still continue the incomplete \pname{AR-Cast}s in parallel.\;
}

\SetKwBlock{STEP}{Epoch~1: (Player $P_i$)}{}
\setcounter{AlgoLine}{0}
\STEP{

	Create a bit string $a_i$ of size $n$.\;
	\For{ $j \leftarrow 1$ \KwTo $n$ }{
		\If{$w_i[j] \neq \perp$}{
	 		Assign $a_i[j] \leftarrow 1$.
		 }
		\Else{
			Assign $a_i[j] \leftarrow 0$.
		}
	}\label{AAE1a}}
	\tcp{$a_i$ records which \pname{A-Cast}s are completed so far by $P_i$}
	Run \pname{Asynchronous-IC}($a_i$).\; 
	\tcp{This step reports to all which \pname{A-Casts} are successfully received by $P_i$}
	Store the output of \pname{Asynchronous-IC} in vector $b_i$ such that, element $b_i[j]$ is received from $P_j$.\; \label{AAE19}
	\tcp{After this step every correct nodes know which \pname{A-Cast}s are reported to be complete by which node}
	\Wait { \pname{Asynchronous-IC} completes} {
	\goto Epoch~2}

\SetKwBlock{STEP}{Epoch~2: (Player $P_i$)}{}
\setcounter{AlgoLine}{0}
\STEP{
	
	Let $k_i$ be the index of a column which has at least $(t+1)$ 1s in it. So that, for any other index $l$ of column with (t+1) 1s  $k<l$.
	\tcp{After completion of \pname{Asynchronous-IC} each row of $b_i$ is a bit string of length $n$. That is $b_i$ is essentially an $n\times n$ bit matrix.} \label{AAE2scan}
	\Wait{the \pname{A-Cast} initiated by $P_{k_i}$ completes}{ \label{AAE2wait}
	Assign $v \leftarrow w_i[k_i]$.\; \label{AAE2asg}
	Abort all incomplete \pname{A-Cast}s that are running since Epoch~0.\; \label{AAE2abort}
	Output $v$.
	}
}

\caption{\pname{A-Agree}} \label{async:agreement}
\end{algorithm}

In Epoch 0 of protocol~\ref{async:agreement}: \pname{A-Agree} each of the nodes $P_i$ proposes a direction $u_i$ ,which represents their local frame. They broadcast this direction using \pname{AR-Cast}. All the correct nodes wait for at least $(3t+1)$ such broadcasts to be complete. Then they enter Epoch 1. Since, there are $(3t+1)$ correct nodes they will eventually arrive at Epoch 1. In this step all the correct nodes create a bit string of length $n$ where $j$'th bit represents if the $j$'th \pname{AR-Cast} has been completed successfully in Epoch 0. Then all the nodes send this bit string to all by playing \pname{Asynchronous-IC}. After this they enter Epoch 2. In this Epoch every node has the same set of bit strings. They now look for the lowest inter $k$ such that at least $(t+1)$ bit strings have a $1$ in the $k$'th index of the string. If they have completed that $k$'th \pname{AR-Cast} they output their direction received from that broadcast. If the $k$'th \pname{AR-Cast} is not complete for a node, it waits until it completes and then output.
The election of $k$ ensures that at least one correct node has completed the $k$'th \pname{AR-Cast} so by Consistency of asynchronous reference frame broadcast all the correct nodes will eventually complete the $k$'th \pname{AR-Cast}. This ensures that the \pname{A-Agree} eventually completes. There is no conditional loop in this protocol and all the subroutines run in constant expected time. So, the \pname{A-Agree} is also a constant expected time protocol. 
The formal proofs are given in the appendix. 

\vspace{15pt}
\section{Conclusion}
In this work we have presented the first asynchronous reference frame agreement protocol. The synchronous protocol for spatial reference frame agreement presented in~\cite{IMSW14} can tolerate up to $t<n/3$ faulty nodes. Whereas, the asynchronous protocol we have presented tolerates only $t<n/4$ faulty nodes. Even though we pay this extra price in fault tolerance, an asynchronous protocol is a fully general reference frame agreement protocol. If message delays are fixed, our protocol can also be used to synchronise clocks~\cite{Chuang06}, which is an important problem in its own right. There are classical protocols for asynchronous agreement on bits which achieve $t<n/3$ in constant expected time, it remains open to see if this bound can be achieved by reference frame agreement protocols for a quantum network.

\begin{acknowledgments}
We thank Lo\"ick Magnin and Michael Ben-Or (via Loick Magnin) for helpful pointers, and David Elkouss for comments on an earlier version of this article. This work was supported by NRF CRP Grant ``Space based QKD''
and STW, QuTech. Stephanie Wehner is also supported by NWO VIDI Grant.
\end{acknowledgments}


\clearpage

\section{appendix}

\subsection{Asynchronous reference frame broadcast} 
\label{prot:AB}

To prove correctness of our \pname{AR-Cast} we have to prove theorem~\ref{thm:arc} as repeated here. 

\setcounter{theorem}{1}
\begin{theorem}  
In a complete network of $n$ nodes that are pairwise connected by public authenticated quantum and classical channels, if a bipartite $\delta$-estimate direction protocol that uses $m$ qubits to achieve success probability $\qsucc \geq 1-e^{-\Omega(m{\delta^2})}$ is used, then protocol \pname{AR-Cast} is a $(42\delta,14\delta)$-\emph{asynchronous reference frame broadcast protocol}, with success probability at least $1-e^{-\Omega(m{\delta^2}-\log n)}$ that can tolerate up to $t<n/4$ faulty nodes.
\end{theorem}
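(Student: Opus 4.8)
The plan is to treat \pname{AR-Cast} as a noise-tolerant variant of Bracha-style reliable broadcast and to verify the three conditions of Definition~\ref{def:acra} one at a time, after first reducing to a noiseless model. \pname{AR-Cast} invokes the bipartite $\delta$-estimate direction protocol at most $O(n^2)$ times (a constant number of message types on each of the $\binom{n}{2}$ links), so by a union bound, with probability at least $1-O(n^2)e^{-\Omega(m\delta^2)}=1-e^{-\Omega(m\delta^2-\log n)}$ every direction sent from a correct node to a correct node is received within Euclidean distance $\delta$ of the value sent; I condition on this event and argue deterministically (a direction touching a faulty node carries no guarantee, which is consistent with the Byzantine model). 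I then record the elementary facts used throughout: $d$ obeys the triangle inequality and is invariant under a simultaneous rotation of both arguments, so distances between cluster centres held in different local frames are meaningful; the centre of a cluster of parameter $\rho$ lies within $\rho$ of each of its members; and a direction broadcast by a correct node is held by any two correct receivers within $2\delta$ of each other. The combinatorial heart is a \emph{quorum lemma}: an echo-cluster of size $\ge n-2t$ held by one correct node and an echo-cluster of size $\ge n-t$ held by another share at least $(n-2t)+(n-t)-n=n-3t$ echo-senders, at most $t$ of which are faulty, leaving at least $n-4t\ge 1$ common \emph{correct} echo-senders --- this is exactly where the hypothesis $t<n/4$ enters --- and since a correct node broadcasts a single echo, chaining through such a sender forces the two echo-cluster centres to lie within $O(\delta)$ of each other.

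Next I would prove the Termination~1 and Correctness clauses together. Suppose $P_s$ is correct and broadcasts $u$. Every correct node eventually receives an init direction within $\delta$ of $u$, takes the ``Either'' branch of Epoch~1, broadcasts an echo within $\delta$ of what it received, and enters Epoch~2. At every correct node the $\ge n-t$ correct echoes all lie within $2\delta$ of $u$, hence pairwise within $4\delta$ --- precisely the threshold Epoch~2 asks for --- so every correct node eventually detects an echo-cluster of parameter $4\delta$ and size $\ge n-t$, broadcasts \readyone\ (or, on the other branch, \readytwo) with centre within $O(\delta)$ of $u$, and enters Epoch~3. There the $\ge n-t$ correct ready-directions lie within $20\delta$ of one another at every correct node, so every correct node detects a \readyone/\readytwo-cluster of parameter $20\delta$ and size $\ge n-t$ and outputs; a routine accounting of the accumulated $\delta$'s shows the output lies within $14\delta$ of $u$. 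This settles Termination~1, and since it holds whenever $P_s$ is correct it is exactly the Correctness clause.

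For Termination~2, assume some correct node completes; it then saw a \readyone/\readytwo-cluster of size $\ge n-t$, so at least $n-2t$ correct nodes broadcast a ready message. The first correct node to broadcast a ready message cannot have used either \readytwo\ trigger, since both require a cluster of $\ge t+1$ ready messages and such a cluster contains at least one ready message from a correct node that acted strictly earlier; hence it used the \readyone\ echo-path, so some correct node held an echo-cluster of size $\ge n-t$ and therefore $\ge n-2t$ correct nodes broadcast echoes. By the quorum lemma a lagging correct node, once it has received those echoes together with the $\ge t+1$ correct ready messages, can fulfil a \readytwo\ trigger (the $4\delta$ and $10\delta$ slacks are sized to absorb the extra per-link $\delta$ a third party incurs relative to the node whose action triggered them). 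Hence every correct node eventually broadcasts a ready message; once $\ge n-t$ of them have, every correct node sees a \readyone/\readytwo-cluster of parameter $20\delta$ and size $\ge n-t$ and outputs, which gives Termination~2.

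Finally, Consistency. By the quorum lemma all correct \readyone-directions are mutually within $O(\delta)$, and every correct \readytwo-direction lies within $O(\delta)$ of every correct \readyone-direction (again via a shared correct echo-sender, using $t<n/4$), so every direction that a correct node broadcasts as \readyone\ or \readytwo\ lies in a single ball of radius $O(\delta)$; a correct output is the centroid of the $\ge n-t$ ready-directions of some $20\delta$-cluster, of which $\ge n-2t$ are correct (hence in that ball, up to per-link slack) and $\le t$ are adversarial (but within $20\delta$ of the correct ones), and bounding the displacement of such a centroid and then the distance between two of them yields $d(v_i,v_j)\le 42\delta$, which is the Consistency bound. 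The step I expect to be the real obstacle is carrying all of these slack terms simultaneously through the argument so that they close at exactly $(42\delta,14\delta)$, and making the asynchronous amplification of the previous paragraph airtight --- i.e.\ showing that a lagging correct node always has a path forward even though the noise inflates the clusters it sees relative to those seen by the node whose action it is chasing; the combinatorial skeleton is the familiar reliable-broadcast argument and is comparatively routine.
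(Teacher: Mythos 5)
Your plan follows essentially the same route as the paper's proof: a union bound over the $O(n^2)$ direction transmissions to reduce to the noiseless case, a quorum-intersection count using $t<n/4$ to find a common correct echo-sender behind any two ready-clusters (the paper's Lemmas on \readyone/\readytwo\ proximity), the "first correct ready-sender must have used the \readyone\ path" causal argument (the paper's DAG lemma), and triangle-inequality chaining through shared correct nodes to close the $14\delta$ and $42\delta$ bounds. The only differences are organizational (you prove Termination~2 directly rather than as a corollary of Consistency), and the delicate amplification step you flag at the end is glossed over in essentially the same way by the paper itself.
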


For this we observe several properties of Protocol~\ref{async:broadcast} in the following lemmas. The first observation is that if two different correct nodes send [\readyone]~type messages then the direction they send are close to each other with high probability. 

\begin{lemma} \label {lem:readyone}
For $t<n/4$, $\delta>0, \qsucc>0$, if two correct nodes $P_i$ and $P_j$ send ([\readyone],$u$) and ([\readyone],$v$) respectively, then $d(u,v) \leq 10 \delta$ with probability at least $\qsucc^{n+n^2}$.
\end{lemma}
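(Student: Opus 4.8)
The plan is to extract from Protocol~\ref{async:broadcast} exactly when a correct node emits a [\readyone] message and then to combine a counting argument with the triangle inequality for the (rotation-invariant, hence bona fide) metric $d$.

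First I would record the structure of [\readyone] emissions: a correct node $P_i$ sends $(\readyone,u)$ only at line~\ref{rdy1gen}, that is, in Epoch~2, and only once it has found a set $S_i$ of at least $n-t$ indices with $w_i[k]\in C_i^{4\delta}([\text{echo}],u)$ for every $k\in S_i$, where $u$ is the average (in $P_i$'s own frame) of $\{w_i[k]:k\in S_i\}$ and $d(w_i[k],w_i[k'])\le 4\delta$ for all $k,k'\in S_i$; likewise $P_j$ has such a set $S_j$, $|S_j|\ge n-t$, with centre $v$. Since $S_i,S_j\subseteq\{1,\dots,n\}$ we have $|S_i\cap S_j|\ge 2(n-t)-n=n-2t$, and as at most $t$ of these indices belong to faulty nodes, at least $n-3t$ of them are correct; because $t<n/4$ we have $n-3t>t\ge 0$, so $S_i\cap S_j$ contains at least one correct node, which I fix and call $P_k$.

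The key property of $P_k$ is that a correct node echoes at most one direction: inspecting Epoch~1, $P_k$ sent a single message $(\text{echo},x_k)$ for some $x_k$ expressed in its own frame, and it did send such a message since both $P_i$ and $P_j$ received an echo-tagged message from $P_k$ over authenticated channels (and, being correct, $P_k$ is the genuine source). Provided the two bipartite direction estimations $P_k\to P_i$ and $P_k\to P_j$ are accurate we get $d(w_i[k],x_k)\le\delta$ and $d(w_j[k],x_k)\le\delta$. Moreover, $u$ being the centroid of points pairwise within $4\delta$ gives $d(u,w_i[k])\le\frac{|S_i|-1}{|S_i|}\,4\delta\le 4\delta$, and likewise $d(v,w_j[k])\le 4\delta$. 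The triangle inequality then yields
\[
d(u,v)\ \le\ d(u,w_i[k])+d(w_i[k],x_k)+d(x_k,w_j[k])+d(w_j[k],v)\ \le\ 4\delta+\delta+\delta+4\delta\ =\ 10\delta .
\]

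For the probability bound I would observe that randomness enters the argument only through the accuracy of the bipartite direction estimations, and that the value a correct node broadcasts in a [\readyone] message is always an echo-cluster centre; hence it is enough that the sender's [init] broadcast and every echo broadcast land within $\delta$ of the transmitted direction, which is at most $n+n^2$ bipartite estimations in all. Since each uses fresh quantum systems and succeeds with probability at least $\qsucc$, they all succeed with probability at least $\qsucc^{\,n+n^2}$, the stated bound. The only real subtlety is the reference-frame bookkeeping: every distance in the displayed chain compares vectors held in different local frames, so the triangle inequality is legitimate only because $d$ is defined to be a genuine metric, invariant under the rotation relating two frames; granting that, the counting step — it is precisely $t<n/4$ that forces a correct node into $S_i\cap S_j$ — and the centroid estimate are routine.
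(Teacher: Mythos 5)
Your proof is correct and follows essentially the same route as the paper's: both locate a correct node whose single echoed direction lies in the two $4\delta$-echo clusters and then chain the triangle inequality through it as $4\delta+\delta+\delta+4\delta=10\delta$, with the same accounting of at most $n+n^2$ init/echo transmissions for the probability bound. The only cosmetic difference is that you find the common correct node by direct inclusion--exclusion ($\abs{S_i\cap S_j}\ge 2(n-t)-n=n-2t$, of which at least $n-3t\ge 1$ are correct), whereas the paper argues by contradiction from $\abs{A_i\cup A_j}>n$; both invocations of $t<n/4$ are equivalent.
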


\begin{proof}
In step~\ref{rdy1gen} of Epoch 2 when a [\readyone]~message is generated there are at most $n$ init messages originated from the sender and at most $n^2$ echo messages generated by the other nodes. So, with probability at least $\qsucc^{n+n^2}$ all the transmissions which are among correct nodes are successful. Conditioning on this, we prove, 
\begin{align}\label{eq:lemr1}
d(u,v) \leq 10 \delta. 
\end{align}

We show this in two steps. First, we show that there exists a common correct node $P_k$ in $P(C_{i}^{4\delta}(\mbox{[echo]},u))$ and $P(C_{j}^{4\delta}(\mbox{[echo]},v))$, where $C_{i}^{4\delta}(\mbox{[echo]},u)$ and $C_{j}^{4\delta}(\mbox{[echo]},v)$ are the cluster of echo type directions received by $P_i$ and $P_j$ respectively .
Then using the triangle inequality with the fact that the echo vector from $P_k$ must be close to both of the cluster centers $u$ and $v$, we derive inequality~\eqref{eq:lemr1}.

Now, for the first step, let us denote $A_i$ and $ A_j$ to be the set of nodes from which the vectors respectively in $C_{i}^{4\delta}(\mbox{[echo]},u)$ and $C_{j}^{4\delta}(\mbox{[echo]},v)$ have originated. And $B_i$ and $B_j$ to be the correct nodes in $A_i$ and $ A_j$ respectively. Formally, 

\begin{align}
A_i &= P(C_{i}^{4\delta}(\mbox{[echo]},u)),\\
A_j &= P(C_{j}^{4\delta}(\mbox{[echo]},v)),\\
B_i &= \{ P_l:P_l\in A_i \mbox{ and } P_l \mbox{ is correct.} \},\\
B_j &= \{ P_l:P_l\in A_j \mbox{ and } P_l \mbox{ is correct.} \}.
\end{align}

Note that at this step $\abs{A_i} \geq n-t$ and $\abs{A_j} \geq n-t$. We want to show that,
\begin{align}
B_i \cap B_j \neq \emptyset.
\end{align}

We do this by contradiction: let us assume that,

\begin{align}
B_i \cap B_j = \emptyset. \label{eq:cnt}
\end{align}

Note that, 
\begin{align}
&\abs{A_i} \geq n-t \\
&\Rightarrow \abs{A_i - B_i} + \abs{B_i} \geq n-t,\\ 
\label{eq:nonfaulty}&\Rightarrow t + \abs{B_i} \geq n-t,\\
&\Rightarrow \abs{B_i} \geq n-2t,\\
\label{eq:nonmain}&\Rightarrow \abs{B_i} > n - 2 (n/4) = n/2.
\end{align}

Here, inequality~\eqref{eq:nonfaulty} holds because at most $t$ of the nodes are faulty. And inequality~\eqref{eq:nonmain} holds because $t<n/4$.

Now, 
\begin{align}
\nonumber \abs{A_i \cup A_j} &= \abs{(A_i - B_i )\cup (  A_j - B_j) \cup B_i \cup B_j},\\
\label{eq:usecnt} &\geq  \abs{( A_j- B_j)}  +\abs{ B_i} +\abs{ B_j},\\ 
&= \abs{ A_j}  +\abs{ B_i},\\
\label{eq:lm1sub}&> (n-t) +n/2,\\
\label{eq:lem1final}& > n - n/4+n/2 = 5n/4
\end{align}
Here, inequality~\eqref{eq:usecnt} uses inequality~\eqref{eq:cnt}, inequality~\eqref{eq:lm1sub} follows from the definition from the size of $A_j$ and inequality~\eqref{eq:nonmain}. And inequality~\eqref{eq:lem1final} follows because, $t<n/4$.
However, this is a contradiction, because there are only $n$ nodes in the network. Therefore, we have, 
\begin{align}
B_i \cap B_j \neq \emptyset.
\end{align}

So, there exists a common correct node $P_k \in B_i \cap B_j$ in $P(C_{i}^{4\delta}(\mbox{[echo]},u))$ and $P(C_{j}^{4\delta}(\mbox{[echo]},v))$. Since $P_k$ is correct, it must have sent the same echo type message to both $P_i$ and $P_j$. So, using the triangle inequality we have,  
\begin{align}
d(w_i[k], w_j[k]) &\leq d(w_i[k],u_k) + d(u_k,w_j[k]),\\
\label{eq:r11correctdist}&\leq \delta + \delta = 2\delta.
\end{align}

Now inequality~\eqref{eq:lemr1} follows because,
\begin{align}
&d(u,v) \leq d(u,w_i[k])+ d(w_i[k],w_j[k])+d(w_j[k],v),\\
\label{eq:r11appldef} &\leq 4\delta + d(w_i[k],w_j[k]) + 4\delta, \\
\label{eq:r11correct}&\leq 4\delta + 2\delta + 4\delta = 10\delta.
\end{align}
Here, inequality~\eqref{eq:r11appldef} follows from the definitions of $C_{i}^{4\delta}(\mbox{[echo]},u)$ and $C_{j}^{4\delta}(\mbox{[echo]},v)$ and inequality~\eqref{eq:r11correct} follows from inequality~\eqref{eq:r11correctdist}.
\end{proof}

In lemma~\ref{lem:readyone} we have shown the relation between two [\readyone] type directions from two different correct nodes. Now we show that if a correct node sends a [\readyone] and another correct node sends a [\readytwo]~type message then the directions they send are close with high probability. Both of these proofs use similar techniques.

\begin{lemma} \label{lem:readyonetwo}
For $t<n/4$, $\delta>0, \qsucc>0$, if two correct nodes $P_i$ and $P_j$ send ([\readyone],$u$) and ([\readytwo],$v$) accordingly, then $d(u,v) \leq 10 \delta$ with probability at least $\qsucc^{n+2n^2}$.
\end{lemma}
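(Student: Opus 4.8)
The plan is to follow the argument of Lemma~\ref{lem:readyone} almost verbatim, accounting for the weaker cluster size guaranteed by the [\readytwo] rule. First I would set up the conditioning. When the correct node $P_i$ emits ([\readyone],$u$) it has accepted, via step~\ref{rdy1gen}, an echo cluster $C_i^{4\delta}(\mbox{[echo]},u)$ of size at least $n-t$; when the correct node $P_j$ emits ([\readytwo],$v$) it has accepted an echo cluster $C_j^{4\delta}(\mbox{[echo]},v)$ of size at least $n-2t$ whose centre is $v$, together with a cluster of [\readyone]/[\readytwo] directions of size at least $t+1$. By the time both have occurred at most $n$ init, $n^2$ echo and $n^2$ ready transmissions have been made in total, so with probability at least $\qsucc^{n+2n^2}$ every one of these whose sender and receiver are both correct is accurate to within $\delta$; the remainder of the proof is conditioned on this event.

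Next I would show that the two echo clusters have a correct node in common. Let $A_i=P(C_i^{4\delta}(\mbox{[echo]},u))$, $A_j=P(C_j^{4\delta}(\mbox{[echo]},v))$, and let $B_i,B_j$ be the correct nodes inside $A_i,A_j$. Since at most $t$ nodes are faulty, $|B_i|\ge|A_i|-t\ge n-2t$, and also $|A_j|\ge n-2t$. Supposing for contradiction that $B_i\cap B_j=\emptyset$, the three sets $A_j-B_j$ (all faulty), $B_i$ and $B_j$ are pairwise disjoint subsets of $A_i\cup A_j$, so $|A_i\cup A_j|\ge|A_j-B_j|+|B_i|+|B_j|=|A_j|+|B_i|\ge 2(n-2t)>n$ because $t<n/4$; this contradicts the network having only $n$ nodes. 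Hence some correct $P_k$ lies in $B_i\cap B_j$.

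Finally I would finish with the triangle inequality exactly as before. Being correct, $P_k$ sent the same echo direction to $P_i$ and to $P_j$, so under the conditioning $d(w_i[k],w_j[k])\le\delta+\delta=2\delta$. Since a member of a $c$-cluster lies within $c$ of that cluster's centre, $d(u,w_i[k])\le 4\delta$ and $d(w_j[k],v)\le 4\delta$, whence $d(u,v)\le d(u,w_i[k])+d(w_i[k],w_j[k])+d(w_j[k],v)\le 4\delta+2\delta+4\delta=10\delta$.

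I do not expect a real obstacle here: the only substantive differences from Lemma~\ref{lem:readyone} are (i) in the counting step one must use $|A_j|\ge n-2t$ from the [\readytwo] rule rather than $n-t$, and check that $2(n-2t)>n$ still holds for $t<n/4$, and (ii) the success probability must absorb the up-to-$n^2$ additional ready transmissions that $P_j$ relies on, giving $\qsucc^{n+2n^2}$ in place of $\qsucc^{n+n^2}$. The frame-conversion convention for $d(\cdot,\cdot)$, the member-to-centre bound, and the triangle inequality are all used exactly as in the previous lemma.
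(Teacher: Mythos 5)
Your proposal is correct and follows essentially the same route as the paper's proof: condition on all $n+2n^2$ relevant transmissions between correct nodes succeeding, show $B_i\cap B_j\neq\emptyset$ by a counting contradiction using $|A_i|\geq n-t$ and $|A_j|\geq n-2t$, and finish with the triangle inequality $4\delta+2\delta+4\delta=10\delta$. The only difference is cosmetic: you bound $|A_j|+|B_i|\geq 2(n-2t)>n$ directly, while the paper passes through $|B_i|>n/2$; both reduce to $t<n/4$.
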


\begin{proof}

When a [\readytwo]~message is generated there are at most $n$ init, $n^2$ echo and in total $n^2$ [\readyone]~or [\readytwo]~messages generated in the protocol. With probability at least $\qsucc^{n+2{n^2}}$ all the transmissions which are among correct nodes  are successful. Conditioning on this, we show that, 
\begin{align} \label{lemr12main}
d(u,v)\leq 10\delta.
\end{align}

We do this in two steps, first we show that there is a common correct node $P_k$ in $P(C_{i}^{4\delta}(\mbox{[echo]},u))$ and $P(C_{j}^{4\delta}(\mbox{[echo]},v))$. Then using the triangle inequality with the fact that both of the cluster centers $u$ and $v$ must be close to the echo direction sent from $P_k$ we prove the inequality~\eqref{lemr12main}.

Now, for the first step, let us denote $A_i$ and $ A_j$ to be the set of nodes from which the vectors respectively in $(C_{i}^{4\delta}(\mbox{[echo]},u)$ and $C_{j}^{4\delta}(\mbox{[echo]},v)$ have originated. And $B_i$ and $B_j$ to be the correct nodes in $A_i$ and $ A_j$ respectively. Formally, 

\begin{align}
A_i &= P(C_{i}^{4\delta}(\mbox{[echo]},u)),\\
A_j &= P(C_{j}^{4\delta}(\mbox{[echo]},v)),\\
B_i &= \{ P_l:P_l\in A_i \mbox{ and } P_l \mbox{ is correct.} \},\\
B_j &= \{ P_l:P_l\in A_j \mbox{ and } P_l \mbox{ is correct.} \}.
\end{align}

Note that here $\abs{A_i} \geq n-t$ and $\abs{A_j} \geq n-2t$. 
We want to show that,
\begin{align}
B_i \cap B_j \neq \emptyset.
\end{align}

We do this by contradiction: let us assume that,

\begin{align}
B_i \cap B_j = \emptyset.
\end{align}

Note that, 
\begin{align}
&\abs{A_i} \geq n-t \\
&\Rightarrow \abs{A_i - B_i} + \abs{B_i} \geq n-t,\\
\label{eq:r12nonfaulty}&\Rightarrow t + \abs{B_i} \geq n-t,\\
&\Rightarrow \abs{B_i} \geq n-2t,\\
\label{eq:r12nonmain}&\Rightarrow \abs{B_i} > n - 2 (n/4) = n/2.
\end{align}

Here, inequality~\eqref{eq:r12nonfaulty} holds because at most $t$ of the nodes are faulty. And inequality~\eqref{eq:r12nonmain} holds because $t<n/4$.

Now, 
\begin{align}
\nonumber \abs{A_i \cup A_j} &= \abs{(A_i - B_i )\cup (  A_j - B_j) \cup B_i \cup B_j},\\
&\geq  \abs{( A_j- B_j)}  +\abs{ B_i} +\abs{ B_j},\\
&= \abs{ A_j}  +\abs{ B_i},\\
\label{eq:r12sub}&> (n-2t) +n/2,\\
\label{eq:r12final}& > n - n/2+n/2 =  n
\end{align}

Here, inequality~\eqref{eq:r12sub} follows from the definition of $A_j$ and inequality~\eqref{eq:r12nonmain}. And inequality~\eqref{eq:r12final} follows because, $t<n/4$.
However, this is a contradiction, because there are only $n$ nodes in the network. Therefore, we have, 
\begin{align}
B_i \cap B_j \neq \emptyset.
\end{align}

So, there exists a common correct node $P_k$ in $P(C_{i}^{4\delta}(\mbox{[echo]},u))$ and $P(C_{j}^{4\delta}(\mbox{[echo]},v))$. As $P_k$ is correct, it must have sent the same echo type message to both $P_i$ and $P_j$. So, using the triangle inequality we have,  
\begin{align}
d(w_i[k], w_j[k]) &\leq d(w_i[k],u_k) + d(u_k,w_j[k]),\\
\label{eq:correctdist}&\leq \delta + \delta = 2\delta.
\end{align}

Now inequality~\eqref{lemr12main} follows because,
\begin{align}
&d(u,v) \leq d(u,w_i[k])+ d(w_i[k],w_j[k])+d(w_j[k],v),\\
\label{eq:appldef} &\leq 4\delta + d(w_i[k],w_j[k]) + 4\delta, \\
\label{eq:correct}&\leq 4\delta + 2\delta + 4\delta = 10\delta.
\end{align}
Here, inequality~\eqref{eq:appldef} follows from the definitions of $C_{i}^{4\delta}(\mbox{[echo]},u)$ and $C_{j}^{4\delta}(\mbox{[echo]},v)$ and inequality~\eqref{eq:correct} follows from inequality~\eqref{eq:correctdist}.

\end{proof}

Now we show that all the correct nodes cannot send only [\readytwo]~type messages. That is, if there exists a [\readytwo]~message sent from a correct node, then there must pre-exist a [\readyone]~message sent from another correct node. 

\begin{lemma} \label{lem:causal}
For $t<n/4$, $\delta>0, \qsucc>0$, if a correct node $P_j$ sends ([\readytwo],$v$), then with probability at least $\qsucc^{n+2n^2}$, there exists a correct node $P_i$ which has sent ([\readyone],$u$) .
\end{lemma}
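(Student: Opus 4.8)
The plan is to prove the contrapositive-flavoured statement by tracing the causal history of the first $[\readytwo]$ message. Consider the $[\readytwo]$ message sent by $P_j$; suppose towards a contradiction that no correct node has sent any $[\readyone]$ message before this point in the execution. First I would note that, exactly as in Lemma~\ref{lem:readyone} and Lemma~\ref{lem:readyonetwo}, when the $[\readytwo]$ message is generated there are at most $n$ init, at most $n^2$ echo, and at most $n^2$ ready-type messages in total, so with probability at least $\qsucc^{n+2n^2}$ all transmissions between correct nodes are successful; I condition on this event for the rest of the argument.

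Next I would invoke the triggering condition for a $[\readytwo]$ message (from Epoch~1 or Epoch~2 of \pname{AR-Cast}): $P_j$ must have seen a cluster $C_j^{10\delta}([\readyone,\readytwo],v_c)$ of size at least $t+1$. Since at most $t$ nodes are faulty, this cluster contains at least one vector $w_j[\ell]$ received from a correct node $P_\ell$. By assumption $P_\ell$ did not send a $[\readyone]$, so $P_\ell$ must have sent a $[\readytwo]$. I would then pick the \emph{earliest} correct node to send a $[\readytwo]$ (well-defined since the set of such events is finite and, for the purposes of this causal argument, we may order events by when the triggering condition was met at each sender). Call it $P_{j^*}$, sending $([\readytwo],v^*)$. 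The same triggering condition forces $P_{j^*}$ to have already received a size-$(t+1)$ cluster of ready-type vectors, hence a ready-type message from some correct node $P_{\ell^*}$; by minimality $P_{\ell^*}$ cannot have sent a $[\readytwo]$ before $P_{j^*}$ did, and by the standing assumption it sent no $[\readyone]$ either — contradiction, since a correct node in that cluster sent \emph{some} ready message that $P_{j^*}$ received before triggering.

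The main obstacle I anticipate is making the notion of ``earliest $[\readytwo]$'' rigorous in the asynchronous model, where there is no global clock: one must argue in terms of a causal (happened-before) order on message-send events rather than wall-clock time, and check that the cluster a correct node uses to trigger its own $[\readytwo]$ consists of messages it genuinely received, each of which was therefore sent strictly earlier in the causal order. Once the causal ordering is set up, the counting step ($t+1$ vectors force a correct contributor, since $t<n/4<n$, and indeed the weaker $t<t+1$ suffices here) and the elimination of the $[\readyone]$ possibility are routine. I would close by remarking that this lemma is what rules out the degenerate execution in which every correct node outputs via the $[\readytwo]$ branch with no $[\readyone]$ ever anchoring the agreed direction, and that combined with Lemmas~\ref{lem:readyone} and~\ref{lem:readyonetwo} it will feed into the consistency proof of \pname{AR-Cast}.
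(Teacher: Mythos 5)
Your proposal is correct and is essentially the paper's own argument: both trace the causal dependency of \readytwo~messages backwards, using the fact that each correct \readytwo-sender's triggering cluster of size $t+1$ must contain a ready-type message from a correct node, and that this ``received-from'' relation is acyclic and finite, so the chain must bottom out at a correct node that sent \readyone. The only cosmetic difference is that the paper formalises the well-foundedness by building a finite directed acyclic graph on the correct nodes and locating a sink in the component of $P_j$, whereas you take a minimal (``earliest'') element of the causal order and derive a contradiction --- two phrasings of the same argument, and your worry about making ``earliest'' rigorous is exactly what the paper's DAG construction resolves.
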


\begin{proof}
When a [\readytwo]~message is generated there are at most $n$ [init], $n^2$ [echo] and in total $n^2$ [\readyone]~or [\readytwo]~messages generated in the protocol. With probability at least $\qsucc^{n+2{n^2}}$  all the transmissions which are among correct nodes are successful. In this case,
just before making the decision to send a ([\readytwo],$v$) message node $P_j$ must have received at least (t+1) [\readyone]~or [\readytwo]~messages from nodes in $P(C_{i}^{10\delta}(\mbox{[\readyone,\readytwo]}v_c))$. Of these, at least one node---let's call it $P_k$---is correct. If $P_k$ has also sent a [\readytwo]~type message, we can find another correct node in its $P(C_{k}^{10\delta}(\mbox{[\readyone,\readytwo]}v_c))$ and so on. This way, eventually we will find a correct node who has sent a [\readyone]~type message.

To see this, let us define a directed graph $G(V,E)$ with vertex set $V= \{P_i : P_i \mbox{ is correct}\}$,
and 
\begin{align}
E = \{&(P_k,P_i): P_k\mbox{ has sent ready}_2 \nonumber\\
&\mbox{ after receiving ready}_1\mbox{ or ready}_2 \mbox{ from } P_i\}.
\end{align}

One can convince oneself that $G$ is a directed acyclic graph because any cycle in the graph would violate the cause and effect relation of the edge directions.  Now if we look at the connected component of this graph containing $P_j$ there must exist a node $P_i$ in this component with no outgoing edges. Because $V$ only contains correct nodes. This implies $P_i$ is a correct node which has sent a [\readyone]~type message ([\readyone],$u$). This completes the proof.
\end{proof}

Now the only thing that remains is to show that two [\readytwo]~type directions sent from two correct nodes are close with high probability. 

\begin{lemma} \label{lem:r22}
For $t<n/4$, $\delta>0, \qsucc>0$, if two nodes $P_i$ and $P_j$ sends ([\readytwo],$u$) and ([\readytwo],$v$) respectively, then $d(u,v) \leq 20 \delta$ with probability at least $\qsucc^{n+2n^2}$.
\end{lemma}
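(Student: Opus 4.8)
The plan is to reduce Lemma~\ref{lem:r22} to the two [\readyone]/[\readytwo] proximity lemmas already in hand, using Lemma~\ref{lem:causal} to pin down a common ``anchor'' correct node. First I would condition on the single event that every transmission among correct nodes succeeds: at the moment the two [\readytwo] messages in question are generated there are at most $n$ [init], at most $n^2$ [echo], and at most $n^2$ [\readyone] or [\readytwo] transmissions in the whole protocol, so this event has probability at least $\qsucc^{n+2n^2}$. Crucially, this is precisely the conditioning event used in Lemmas~\ref{lem:readyonetwo} and~\ref{lem:causal}, so under it \emph{all} of their conclusions hold simultaneously; no union bound is needed and the success probability stays $\qsucc^{n+2n^2}$ rather than being raised to a higher power.

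Conditioned on this event, the argument proceeds in four short steps. (i) Apply Lemma~\ref{lem:causal} to the correct node $P_i$, which has sent [\readytwo]; this yields a correct node $P_k$ that has sent ([\readyone],$w$) for some direction $w$. (ii) Apply Lemma~\ref{lem:readyonetwo} to the pair $(P_k,P_i)$ — $P_k$ generating a [\readyone] message and $P_i$ a [\readytwo] message — to get $d(w,u)\le 10\delta$. (iii) Apply Lemma~\ref{lem:readyonetwo} once more, to the pair $(P_k,P_j)$, to get $d(w,v)\le 10\delta$; note the \emph{same} $P_k$ works for both applications since Lemma~\ref{lem:readyonetwo} holds for any correct [\readyone] sender paired with any correct [\readytwo] sender. (iv) Finally invoke the triangle inequality for the frame-converted distance $d(\cdot,\cdot)$ — valid because the local frames differ only by a rotation, which also makes $d$ symmetric — to conclude
\begin{align}
d(u,v)\ \le\ d(u,w)+d(w,v)\ \le\ 10\delta+10\delta\ =\ 20\delta.
\end{align}

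I do not anticipate a real obstacle: all the heavy combinatorial work (the counting argument producing a common correct echo-sender) is already encapsulated in Lemma~\ref{lem:readyonetwo}, and the acyclic cause-and-effect argument guaranteeing a correct [\readyone] sender is Lemma~\ref{lem:causal}. The only points that deserve explicit care are bookkeeping ones: checking that the conditioning events of the invoked lemmas coincide so that the probability is not degraded, and remarking that $d$ may be applied with its two arguments in either order since the relevant frames are rotation-related. Once Lemma~\ref{lem:r22} is established, Lemmas~\ref{lem:readyone}, \ref{lem:readyonetwo} and~\ref{lem:r22} together show that any two [\readyone]/[\readytwo] directions emitted by correct nodes lie within $20\delta$, which is exactly what is needed for a correct node to find the radius-$20\delta$ [\readyone,\readytwo] cluster of size at least $n-t$ in Epoch~3 — the ingredient behind the Consistency clause of Theorem~\ref{thm:arc}.
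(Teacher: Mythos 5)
Your proposal is correct and follows essentially the same route as the paper: condition on all correct-to-correct transmissions succeeding (probability at least $\qsucc^{n+2n^2}$), use Lemma~\ref{lem:causal} to obtain a correct node $P_k$ that sent ([\readyone],$w$), apply Lemma~\ref{lem:readyonetwo} twice to get $d(u,w)\le 10\delta$ and $d(v,w)\le 10\delta$, and conclude by the triangle inequality. Your extra bookkeeping remarks (that the conditioning events coincide and that the same anchor $P_k$ serves both applications) are consistent with, and slightly more explicit than, the paper's own argument.
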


\begin{proof}
When a [\readytwo]~message is generated there are at most $n$ [init], $n^2$ [echo] and in total $n^2$ [\readyone]~or [\readytwo]~messages generated in the protocol. With probability at least $\qsucc^{n+2{n^2}}$  all of these transmissions which are between correct nodes are successful. Conditioning on this, we show that,
if correct $P_i$ sends ([\readytwo],$u$) then from Lemma~\ref{lem:causal} there exists a correct node $P_k$ which has sent ([\readyone],$w$). From Lemma~\ref{lem:readyonetwo}, 
\begin{align}
d(u,w) \leq 10\delta,
\end{align}
and 
\begin{align}
d(v,w) \leq 10\delta.
\end{align}

Using the triangle inequality with these we get,
\begin{align}
d(u,v)\leq d(u,w) + d(w,v) \leq 10\delta + 10\delta = 20\delta.
\end{align}

\end{proof}


Now we are ready to prove that our protocol~\ref{async:broadcast} satisfies the first termination condition of definition~\ref{def:acra}.

\begin{lemma}[Termination 1]\label{lem:acastterm1}
For $t<n/4$, $\delta>0, \qsucc>0$, if the sender $P_k$ is correct then the protocol~\ref{async:broadcast} \emph{AR-Cast} eventually terminates with probability at least $\qsucc^{n+n^2}$.
\end{lemma}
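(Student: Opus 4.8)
The plan is to isolate a high-probability ``good event'' on which termination is forced and then reason deterministically. Epoch~0 uses at most $n$ transmissions of an [init]-direction from $P_k$ and the echo phase at most $n^{2}$ transmissions of [echo]-directions; I would let $\mathcal{G}$ be the event that every one of these transmissions \emph{between two correct nodes} succeeds, i.e.\ delivers the intended direction within Euclidean distance $\delta$ (the channel noise on the ensuing [\readyone]/[\readytwo]-transmissions between correct nodes is absorbed into $\mathcal{G}$ in the same way, as in Lemmas~\ref{lem:readyonetwo}--\ref{lem:r22}; only the polynomial size of the exponent matters for the final bound). By the guarantee of the bipartite $\delta$-estimate direction protocol, $\Pr[\mathcal{G}]\ge\qsucc^{\,n+n^{2}}$, so it suffices to show that, conditioned on $\mathcal{G}$, every correct node eventually reaches Epoch~3 and halts; since \pname{AR-Cast} contains no conditional loop, bounded termination follows at once from progress.

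\medskip
\noindent\textbf{Leaving Epochs 0--1.} Since $P_k$ is correct it sends (init,$u$) to all, so under $\mathcal{G}$ every correct node eventually receives some (init,$u_i$) with $d(u,u_i)\le\delta$; hence Epoch~1's first wait-condition is eventually satisfied at every correct node, and every correct node leaves Epoch~1 --- either through its second branch, broadcasting a \readytwo~and passing straight to Epoch~3, or through its first branch, broadcasting (echo,$u_i$) and entering Epoch~2. Every correct node that broadcasts an echo sends a direction within $2\delta$ of $u$ (triangle inequality through $u_i$), so under $\mathcal{G}$ those echoes, received at any correct node, are pairwise within $4\delta$ and thus form a cluster $C^{4\delta}_j(\mbox{[echo]},\cdot)$ there.

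\medskip
\noindent\textbf{Leaving Epoch 2 (the crux).} Here I must show no correct node waits forever in Epoch~2. First I would observe that the \emph{first} \readyone~or \readytwo~message broadcast by any correct node must be a \readyone~produced at line~\ref{rdy1gen}: the second branches of Epochs~1 and~2 each require a previously received [\readyone,\readytwo]-cluster of size $t+1$, which cannot consist of faulty messages alone because there are at most $t$ faulty nodes; and reaching line~\ref{rdy1gen} requires an [echo]-cluster of size $n-t$, so at that instant at least $n-2t$ correct nodes have already echoed. Then, using $t<n/4$ (hence $n-2t>n/2>t$), I would argue by counting --- in the style of Lemmas~\ref{lem:readyone} and~\ref{lem:readyonetwo} --- that however a correct node is placed in Epoch~2, one of its wait-conditions (a size-$(n-t)$ echo cluster, or a size-$(n-2t)$ echo cluster together with a size-$(t+1)$ [\readyone,\readytwo]-cluster whose centre matches) is eventually met, so it broadcasts a \readyone~or \readytwo~and enters Epoch~3. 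Consequently all $\ge n-t$ correct nodes eventually reach Epoch~3, each having broadcast a \readyone~or \readytwo~direction.

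\medskip
\noindent\textbf{Output in Epoch 3.} Tracing cluster centres through $\mathcal{G}$, any \readyone~or \readytwo~direction broadcast by a correct node is the average of an [echo]-cluster whose correct members lie within $2\delta$ of $u$ and whose at most $t$ faulty members therefore lie within $6\delta$ of $u$; because $t<n/4$, the weighted average of these points lies within $4\delta$ of $u$. So all such directions are pairwise within $8\delta$ (and so, under $\mathcal{G}$, are their delivered copies at any correct node), hence the $\ge n-t$ \readyone/\readytwo~directions of the correct nodes form a valid cluster $C^{20\delta}_j(\mbox{[\readyone,\readytwo]},\cdot)$ of size $\ge n-t$ at every correct node --- which also supplies the ``matching centre'' hypothesis invoked above. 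Each correct node then eventually satisfies Epoch~3's wait-condition, outputs, and halts. The step I expect to be the real obstacle is \emph{Leaving Epoch 2}: establishing liveness against an adversary that schedules messages --- in particular one that delays the [init] to some correct node arbitrarily, forcing it through the second ``catch-up'' branch without ever echoing; the triangle-inequality bookkeeping is routine, but the counting that ties $t<n/4$ to the thresholds $n-t$, $n-2t$, $t+1$ so that no correct node gets permanently wedged between Epoch~2's two wait-conditions is where the difficulty sits.
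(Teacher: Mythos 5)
Your proposal follows essentially the same route as the paper's proof: condition on the good event that every direction transmission between correct nodes succeeds, then argue epoch-by-epoch progress, obtaining the final Epoch~3 cluster of the $\geq n-t$ correct [\readyone]/[\readytwo]~directions from Lemmas~\ref{lem:readyone} and~\ref{lem:readyonetwo} (the paper gets a $16\delta$ cluster that way rather than via your averaging bound, but both fit the $20\delta$ threshold). The Epoch~2 liveness you flag as the obstacle is dispatched in the paper by exactly the observation you already make --- with a correct sender, every correct node that enters Epoch~2 has echoed and all $n-t$ correct nodes eventually echo (or have already jumped to Epoch~3 via the catch-up branch), so the size-$(n-t)$ echo-cluster condition is eventually met at every node still waiting --- and your remark about absorbing the [\readyone]/[\readytwo]~transmissions into $\mathcal{G}$ mirrors the paper's own discrepancy between the stated exponent $\qsucc^{n+n^2}$ and the $\qsucc^{n+2n^2}$ actually used in its proof.
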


\begin{proof}
There are at most $n$ [init] messages, $n^2$ [echo] messages and $n^2$ [\readyone]~or [\readytwo]~type messages exchanged in the protocol. With probability at least $\qsucc^{n+2n^2}$  all of these transmissions which are between correct nodes are successful.
In this case, if the sender is correct all the correct nodes eventually receive [init] messages that are at most $2\delta$ apart from each other and send an echo message. So, all the received [echo] messages are at most $3\delta$ apart from the received direction in the [init] message of any correct node. Any node that has sent a [\readyone]~type message will go to epoch 3. The faulty nodes cannot stop the [init] and [echo] messages from correct nodes but they can manipulate the delays, so that some of the correct nodes send [\readytwo]~type messages. After sending the [\readytwo]~these correct nodes will eventually arrive at Epoch 3. From lemma~\ref{lem:readyone} and lemma~\ref{lem:readyonetwo} we can see that for any correct $P_i$ all the received [\readyone]~and [\readytwo]~directions will be in $C_{i}^{16\delta}(\mbox{[\readyone,\readytwo]},v_c)$. And because there are $(n-t)$ of them originating from the correct nodes the protocol~\ref{async:broadcast} \pname{AR-Cast} will eventually terminate. Note that, if the sender is faulty, the definition of \emph{$(\eta,\zeta)$-reference frame broadcast protocol} (Derinition~\ref{def:acra}) do not require any termination. 
\end{proof} 

Now we show that if one correct node outputs a direction, then all the correct nodes eventually output directions that are close to each other. 

\begin{lemma}[Consistency] \label{lem:acastcon}
For $t<n/4$, $\delta>0, \qsucc>0$, in protocol \emph{AR-cast}, if a correct node $P_k$ outputs $v_k$  then all pair of correct nodes $P_i, P_j$ eventually output $v_i, v_j$ respectively such that,
$d(v_i,v_j) \leq 42\delta$ with probability at least $\qsucc^{n+n^2}$.
\end{lemma}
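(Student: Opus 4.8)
The Consistency statement bundles two obligations: once some correct $P_k$ has produced an output, (a) \emph{every} correct node eventually produces an output, and (b) any two correct outputs $v_i,v_j$ satisfy $d(v_i,v_j)\le 42\delta$. I would dispose of (b) first, since it never mentions $P_k$ and is a short instance of the ``common correct node plus triangle inequality'' pattern already used in Lemmas~\ref{lem:readyone}--\ref{lem:r22}. A correct node halts in Epoch~3 only after assembling a cluster $C_i^{20\delta}(\mbox{[\readyone,\readytwo]},v_i)$ of size at least $n-t$; letting $B_i,B_j$ be the \emph{correct} nodes contributing ready-directions to $P_i$'s and $P_j$'s such clusters, at most $t$ contributors of each are faulty, so $|B_i|,|B_j|\ge n-2t$, and since $t<n/4$ we get $|B_i|+|B_j|\ge 2n-4t>n$, forcing a common correct node $P_c\in B_i\cap B_j$. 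Being correct, $P_c$ sent one and the same ready-message to all, so $d(w_i[c],w_j[c])\le 2\delta$ by the two $\delta$-estimate transmissions; combined with $d(w_i[c],v_i)\le 20\delta$ and $d(w_j[c],v_j)\le 20\delta$ (a point of a $20\delta$-cluster lies within $20\delta$ of its centre), the triangle inequality gives $d(v_i,v_j)\le 20\delta+2\delta+20\delta=42\delta$.

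For (a) the idea is that one correct output already injects enough echo- and ready-traffic to drag every correct node into Epoch~3. From $P_k$'s $(n-t)$-cluster at least $n-2t\ge t+1$ correct nodes have broadcast a ready-message; so, invoking Lemma~\ref{lem:causal} in case they all sent [\readytwo], some correct node $P_a$ has broadcast a [\readyone] message. Hence $P_a$ held a cluster $C_a^{4\delta}(\mbox{[echo]},w_a)$ of size $\ge n-t$, so at least $n-2t$ correct nodes broadcast mutually close echoes. Now take any correct $P_m$ that has not yet halted. If $P_m$ is still in Epoch~1 or~2, I would show its [\readytwo]-branch guard eventually fires: $P_m$ eventually receives the $\ge n-2t$ mutually close echoes of those correct echo-senders, and it eventually receives ready-directions from $\ge t+1$ of the correct ready-senders, which are mutually close by Lemmas~\ref{lem:readyone}, \ref{lem:readyonetwo} and~\ref{lem:r22}; moreover the two resulting cluster centres are within $10\delta$ because the correct echo-senders (at least $n-2t$) and the correct ready-senders (also at least $n-2t$) must overlap, so that through a common correct node the ready-directions are pinned to the echo-bulk. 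Thus $P_m$ broadcasts a ready-message and enters Epoch~3. Consequently all $\ge n-t$ correct nodes eventually broadcast ready-messages; these directions are pairwise close (again Lemmas~\ref{lem:readyone}--\ref{lem:r22}), so every correct node eventually collects the Epoch~3 cluster of size $\ge n-t$ and halts with an output.

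The hard part is (a), and within it the bootstrapping of nodes parked in Epochs~1--2: unlike the honest-sender analysis of Lemma~\ref{lem:acastterm1}, a faulty $P_s$ may have scattered the [init] messages, so the $(n-2t)$-sized echo-cluster demanded by the [\readytwo] guard cannot be read off from the sender's direction but must be distilled from whatever cluster $P_a$ happened to form and then relayed through two further $\delta$-estimate transmissions while all the accumulated slack stays inside the protocol's budgets ($4\delta$ for echo-clusters, $10\delta$ for the centre-gap guard, $20\delta$ for the final ready-cluster). Carrying out this constant bookkeeping---and in particular checking the guard's proximity condition $d(w_c,v_c)\le 10\delta$ for $P_m$---is the delicate step; everything else reduces to the pigeonhole-and-triangle-inequality template established in the preceding lemmas.
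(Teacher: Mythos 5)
Your proposal follows essentially the same route as the paper's proof: part (b) is the paper's pigeonhole argument ($|B_i|,|B_j|\ge n-2t>n/2$ forces a common correct contributor to the two $20\delta$ ready-clusters, then the triangle inequality gives $20\delta+2\delta+20\delta=42\delta$), and part (a) is the paper's termination argument (the $\ge n-2t$ correct ready-senders in $P_k$'s cluster plus Lemma~\ref{lem:causal} yield a correct \readyone-sender and hence $\ge n-2t$ correct echoes, which eventually trigger the \readytwo~guard for every correct node stuck in Epochs~1--2). Your added care about verifying the $d(w_c,v_c)\le 10\delta$ guard is a reasonable elaboration of a step the paper glosses over, but it does not change the approach.
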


\begin{proof}
When a [\readytwo]~message is generated there are at most $n$ init, $n^2$ echo and in total $n^2$ [\readyone]~or [\readytwo]~messages generated in the protocol. With probability at least $\qsucc^{n+2{n^2}}$ all of these transmissions which are between correct nodes are successful. In this case, we prove, 
\begin{align} 
\label{eq:bcnst} d(v_i,v_j) \leq 42\delta,
\end{align}
by showing that the successful completion of $P_k$ implies there are enough echo, [\readyone]~and [\readytwo]~type messages generated by correct nodes so that all the other correct nodes eventually receive them and successfully terminate and each pair of their outputs satisfies inequality~\eqref{eq:bcnst}.

 Now, if a correct node $P_k$ outputs $v_k$ then this implies it has received at least $(n-t)$ [\readyone]~or [\readytwo]~messages from nodes in  $P(C_{k}^{20\delta}(\mbox{[\readyone,\readytwo]},v_k))$, of which at least $(n-2t)$ are correct. Messages from these correct nodes eventually reach all the other correct nodes. Also, from lemma~\ref{lem:causal} there exists a correct node which has sent a [\readyone]~ message which implies all the correct nodes eventually receive at least $(n-2t)$ echo messages. That is, all the correct nodes waiting in Epoch 1 or Epoch 2 will satisfy the condition of sending a [\readytwo]~message and go to Epoch 3. Any correct node $P_i$, $P_j$ waiting in Epoch 3 will eventually receive all the [\readyone]~or [\readytwo]~messages sent from correct nodes in $P(C_{i}^{20\delta}(\mbox{[\readyone,\readytwo]},v_i))$ and $P(C_{j}^{20\delta}(\mbox{[\readyone,\readytwo]},v_j))$ accordingly, and output $v_i$, $v_j$ accordingly. 
 
Now we show that $P(C_{i}^{20\delta}(\mbox{[\readyone,\readytwo]},v_i))$ and $P(C_{j}^{20\delta}(\mbox{[\readyone,\readytwo]},v_j))$ have at least one common correct node, which implies the cluster centers are close. 
 
To see this note that each of these clusters have at least $(n-2t)>n-2(n/4)=n/2$ correct nodes. That is more than $n$ correct nodes in total. However there are total $n$ nodes in the networks. This implies at least some of the correct nodes are common in both clusters. Let $P_l$ be such a node.

Now using triangular inequality we have, 
\begin{align}
d(v_i,v_j) &\leq d(v_i, v_i[l])+d(v_i[l],v_l)\nonumber\\&+d(v_l,v_j[l]) +d(v_j[l],v_j),\\
\label{eq:l42d}&\leq20\delta + \delta +\delta + 20\delta = 42\delta.
\end{align}

Here inequality~\eqref{eq:l42d} follows using lemma~\ref{lem:r22}.
\end{proof}

Now the second termination condition.
\begin{lemma}[Termination 2]\label{lem:acastterm2}
For $t<n/4$, $\delta>0, \qsucc>0$, if a correct node $P_i$ completes the protocol then all the correct nodes complete the protocol with probability at least $\qsucc^{n+2{n^2}}$.
\end{lemma}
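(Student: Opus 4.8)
The plan is to condition on the same ``good'' event used in Lemmas~\ref{lem:readyone}--\ref{lem:r22} and then argue in two stages: first that, once one correct node has halted in Epoch~3, enough correct--node traffic has necessarily been generated that \emph{no} correct node can remain stuck in Epoch~1 or Epoch~2; and second that every correct node, having reached Epoch~3, eventually assembles a large enough cluster to output. Concretely, I would fix the good event exactly as before: an execution of \pname{AR-Cast} contains at most $n$ [init], at most $n^2$ [echo] and at most $n^2$ [\readyone]/[\readytwo] transmissions, so with probability at least $\qsucc^{n+2n^2}$ every transmission between two correct nodes is received within $\delta$ of the direction sent; condition on this. Suppose a correct node $P_i$ completes, outputting $v_i$. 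Then in Epoch~3 it observed $C_{i}^{20\delta}(\mbox{[\readyone,\readytwo]},v_i)$ of size at least $n-t$, so at least $n-2t>n/2$ of the corresponding senders are correct; call this set $S$, and note $\abs{S}\ge n-2t\ge t+1$ since $t<n/4$. Under the good event every [\readyone]/[\readytwo] message broadcast by a node of $S$ eventually reaches every correct node. By Lemma~\ref{lem:causal} at least one correct node $P_a$ broadcast a [\readyone]; since [\readyone] is sent only in Epoch~2, at that moment $P_a$ had observed $C_{a}^{4\delta}(\mbox{[echo]},w_a)$ of size at least $n-t$, at least $n-2t$ of whose senders are correct, and their [echo] messages too eventually reach every correct node.

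Next I would turn these guaranteed arrivals into progress. A correct node still waiting in Epoch~1 eventually either receives an [init] message (and advances to Epoch~2) or simultaneously sees the $\ge n-2t$ correct [echo] directions together with the $\ge t+1$ directions broadcast by $S$; using that each correct node forwards identical directions to all recipients, that the transmission error is at most $\delta$, and Lemmas~\ref{lem:readyone},~\ref{lem:readyonetwo},~\ref{lem:r22}, these form an [echo] cluster of the required size and a [\readyone]/[\readytwo] cluster of size $\ge t+1$ whose centres are within the required distance, so the node broadcasts [\readytwo] and advances to Epoch~3. The same arrivals force any correct node waiting in Epoch~2 to satisfy one of its two exit conditions. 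Hence every correct node reaches Epoch~3; and, inspecting the protocol, every path into Epoch~3 passes through a \pname{Send-to-all} of a [\readyone] or [\readytwo] direction, so every correct node has broadcast one such direction. Since there are at least $n-t$ correct nodes, every correct node eventually receives at least $n-t$ [\readyone]/[\readytwo] directions from correct nodes, which by the same lemmas lie in a single cluster $C^{20\delta}(\mbox{[\readyone,\readytwo]},\cdot)$ of size $\ge n-t$---exactly the Epoch~3 wait-condition---so the node outputs and halts. This yields termination of all correct nodes with the stated probability $\qsucc^{n+2n^2}$.

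The step I expect to be the main obstacle is the cluster-parameter bookkeeping in the middle paragraph: one must verify that the [echo] and [\readyone]/[\readytwo] directions which are \emph{guaranteed} to circulate genuinely meet the $4\delta$, $10\delta$ and $20\delta$ thresholds appearing in the Epoch~1 and Epoch~2 entry rules at \emph{every} correct node, not merely some looser constants, which means carefully pushing the $\delta$-transmission error and the triangle inequality through Lemmas~\ref{lem:readyone}--\ref{lem:r22} and exploiting the slack from $t<n/4$ (in particular, identifying the right sub-cluster to feed each condition, e.g. restricting to the correct echo-senders underlying $P_a$'s [\readyone]). By contrast, the asynchrony is handled entirely by the ``message from a correct node eventually arrives'' reasoning above and needs no bound on channel delay; and one does not need to control \emph{which} direction a node outputs---that is the content of Consistency (Lemma~\ref{lem:acastcon})---only that it outputs something, so once the good event is fixed the argument is purely combinatorial.
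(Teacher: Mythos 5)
Your argument is correct in substance, but it takes a much longer road than the paper does. The paper proves this lemma in one line: it observes that the Consistency lemma (Lemma~\ref{lem:acastcon}) already asserts that if one correct node outputs, then \emph{all} pairs of correct nodes eventually output (close) directions, with probability at least $\qsucc^{n+2n^2}$; Termination~2 is therefore an immediate corollary, since ``eventually output'' is exactly ``eventually complete.'' What you have written is, in effect, an inlined reconstruction of the termination half of the proof of Lemma~\ref{lem:acastcon}: the same chain --- $P_i$'s output implies $n-2t$ correct \readyone/\readytwo~senders, Lemma~\ref{lem:causal} supplies a correct \readyone~sender and hence $n-2t$ circulating correct echoes, so every correct node stuck in Epoch~1 or~2 eventually meets an exit condition, reaches Epoch~3, broadcasts, and then assembles the $(n-t)$-sized $20\delta$ cluster. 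So the two approaches agree; yours just refuses the shortcut of citing Consistency. The one thing worth flagging: the ``cluster-parameter bookkeeping'' you single out as the main obstacle (verifying that the guaranteed echoes really form a $4\delta$ cluster at \emph{every} correct node when the original sender may be faulty, rather than a cluster with some looser constant) is a genuine subtlety, and the paper's own proof of Lemma~\ref{lem:acastcon} asserts this step without carrying it out either --- so you have not fallen behind the paper there, but neither document closes that gap explicitly.
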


\begin{proof}
This lemma is a corollary of lemma~\ref{lem:acastcon}. Because lemma~\ref{lem:acastcon} ensures completion with probability at least $\qsucc^{n+2{n^2}}$.

\end{proof}

Now we are ready to prove that our protocol satisfies the correctness condition of definition~\ref{def:acra}.

\begin{lemma}[Correctness] \label{lem:arcmeaning}
For $t<n/4$, $\delta>0, \qsucc>0$, if a correct sender $P_s$ sends (init,u) and a correct node $P_i$ outputs $v_i$ then $d(u,v_i)\leq 14\delta$ with probability $\qsucc^{n+2{n^2}}$.
\end{lemma}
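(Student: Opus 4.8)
The plan is to follow the same two-stage template used for Lemmas~\ref{lem:readyone}--\ref{lem:r22}: first condition on the event that every transmission between correct nodes is accurate, and then propagate the $\delta$-error of the bipartite direction protocol along the chain of message types init $\to$ echo $\to$ [\readyone]/[\readytwo] $\to$ output, carefully tracking the constants. Since an execution contains at most $n$ [init], $n^2$ [echo] and $n^2$ [\readyone]/[\readytwo] messages, all transmissions between correct nodes succeed with probability at least $\qsucc^{n+2n^2}$; everything below is conditioned on this event.

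First I would record the easy observations. As $P_s$ is correct, every correct node $P_j$ receives an init direction $u_j$ with $d(u,u_j)\leq\delta$, so any [echo] message sent by a correct node carries a direction within $\delta$ of $u$ and, once it reaches another correct node, is stored there as a vector within $2\delta$ of $u$. Next I would establish the key intermediate claim: any [\readyone] or [\readytwo] message sent by a correct node carries a direction $w$ with $d(u,w)\leq 6\delta$. Indeed, such a $w$ is the centre of an echo cluster $C^{4\delta}(\mbox{[echo]},w)$ of size at least $n-2t$; since $t<n/4$ this cluster has at least $n-3t>0$ correct contributors, and fixing one of them, say $P_m$, the stored vector $w_k[m]$ lies within $2\delta$ of $u$ while every member of the cluster lies within $4\delta$ of $w_k[m]$. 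Hence every echo in the cluster lies within $6\delta$ of $u$, and so does their average $w$.

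Finally, suppose a correct node $P_i$ outputs $v_i$. Then $v_i$ is the average of a cluster $C_i^{20\delta}(\mbox{[\readyone,\readytwo]},v_i)$ whose contributing nodes number at least $n-t$, at least $n-2t$ of which are correct. By the claim and the accuracy of the $P_j\to P_i$ links, each correct contributor's stored vector is within $6\delta+\delta=7\delta$ of $u$, and every member of the cluster is therefore within $27\delta$ of $u$. Writing $N\geq n-t$ for the cluster size and $N_f\leq t$ for the number of faulty contributors, averaging gives
\begin{align}
d(u,v_i)\;\leq\;\frac{1}{N}\bigl(N_f\cdot 27\delta+(N-N_f)\cdot 7\delta\bigr)\;=\;7\delta+\frac{20\delta\,N_f}{N}\;\leq\;7\delta+\frac{20\delta\,t}{n-t}\;<\;7\delta+\frac{20\delta}{3}\;<\;14\delta,
\end{align}
where the last steps use $t<n/4$, hence $t/(n-t)<1/3$. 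The main obstacle is exactly obtaining this final constant: a plain triangle inequality routed through a single cluster member only yields $27\delta$, so one must exploit that the cluster centre is the \emph{average} and that the faulty contributions are outweighed by a factor strictly better than $3$ --- which is precisely where $t<n/4$ is needed. The only other point requiring care is that a correct node's stored [\readyone]/[\readytwo] vector genuinely originates from a ready-message actually sent by the (correct) node it is attributed to, which follows from the authentication of the channels.
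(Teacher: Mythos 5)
Your proposal is correct and shares the paper's two-stage skeleton: condition on all correct-to-correct transmissions succeeding (probability $\qsucc^{n+2n^2}$), bound the \readyone~/ \readytwo~directions emitted by correct nodes relative to $u$, and then bound the output cluster centre. Where you genuinely differ is in how the distance from a cluster centre to a reference direction is controlled. The paper asserts that the centre of a cluster of parameter $D$ lies within $D/2$ of each member (giving $4\delta$ for correct ready directions, and $10\delta$ from a correct member of the output cluster to $v_c$, hence $4\delta+10\delta=14\delta$); strictly speaking, the average of points with pairwise distances at most $D$ is only guaranteed to be within $D$ of each member, so your route --- bounding \emph{every} member of the relevant cluster from $u$ and invoking convexity of the average, then splitting the output cluster into correct and faulty contributors and using $t<n/4$ to bound the faulty fraction by $1/3$ --- is the more careful one. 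You pay $6\delta$ instead of $4\delta$ at the intermediate stage but recover the same $14\delta$ via the weighted average $7\delta+20\delta\, t/(n-t)<14\delta$; this is also the only place either argument actually needs $t<n/4$ rather than $t<n/3$ for the correctness bound. The one point you gloss over (as does the paper) is that a correct node's echo is guaranteed to be an echo of the \emph{designated sender's} init rather than an init-type message injected by a faulty node; this rests on the channels being authenticated and the sender being identified in the protocol.
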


\begin{proof}

There are at most $n$ init messages, $n^2$ echo messages and $n^2$ [\readyone]~or [\readytwo]~type messages exchanged in the protocol. With probability at least $\qsucc^{n+2n^2}$  all of these transmissions which are between correct nodes are successful.

In this case we prove the lemma in three steps. First, we show that all the [\readyone]~type directions sent from correct nodes are close to $u$. Secondly, we show that all the [\readytwo]~type directions sent from the correct nodes are close to $u$. And finally, from these we conclude the proof. 

For the first step, let us assume that correct node $P_i$ has sent a ([\readyone], $v_i$) message in Epoch 2. So, it has received at least $(n-t)$ echo type messages, of which at least $(n-2t)$ are from correct nodes. Let's assume for some correct node $P_j$ $w_i[j] \in C_i^{4\delta}(v_i)$. Since $P_j$ is correct, using the triangle inequality, we have, 

\begin{align}
d(u,w_i[j]) &\leq d(u,u_j)+d(u_j,w_i[j]),\\
		&\leq \delta + \delta = 2\delta. \label{eq:stocorrect}
\end{align}

The diameter of the cluster $C_i^{4\delta}(v_i)$  is $4\delta$. So, we have, $d(v_i,w_i[j]) \leq 2\delta$. Using this and~\eqref{eq:stocorrect} with the triangle inequality, we have, 
\begin{align}
d(u,v_i) 	& \leq d(u,w_i[j]) + d(w_i[j],v_i),\\
		& \leq 2\delta + 2\delta = 4\delta. \label{eq:stor1}
\end{align}

Now, for the second step, let us assume that a correct node $P_l$ has sent a ([\readytwo], $v_l$) message from Epoch 1 or Epoch 2. So, $v_l$ is a cluster center of at least $(n-2t)$ echo type messages. Of which at least $(n-3t)$ are correct. So, a similar reasoning to the previous step shows,

\begin{align}
d(u,v_l) \leq 4\delta. \label{eq:stor2}
\end{align}

Finally, since the sender is correct from lemma~\ref{lem:acastterm1} we know, all the correct nodes eventually enter Epoch 3 and successfully complete the epoch.

Let us assume a correct node $P_i$ has received a cluster of [\readyone]~or [\readytwo]~ type directions $C_{i}^{20\delta}(\mbox{[\readyone,\readytwo]},v_c)$ of size at least $(n-t)$. So, there is a correct node $P_k$ for which $v_i[k] \in C_{i}^{20\delta}(\mbox{[\readyone,\readytwo]},v_c)$. Here, $C_{i}^{20\delta}(\mbox{[\readyone,\readytwo]},v_c)$ is a cluster of diameter $20\delta$. So, we have $d(v_i[k],v_c)\leq 10\delta$. Using the triangle inequality with this, and~\eqref{eq:stor1} and~\eqref{eq:stor2}, we have, 

\begin{align}
d(u,v_c)	&\leq d(u,w_i[k]) +d(w_i[k],v_c),\\
		&\leq 4\delta + 10\delta = 14\delta.
\end{align}

This concludes the proof.

\end{proof}

Now we give an auxiliary lemma that shows how the probability of success scales with the number of nodes and the success probability of the $\delta$-estimate direction protocol.

\begin{lemma} \label{lem:acaprob}
If a two-node direction estimation protocol is used that transmits $m$ qubits to $\delta$ approximate a direction which succeeds with probability  $\qsucc\geq (1-e^{-\Omega(m\delta )})$ then  with probability at least $\qsucc^{n+2n^2} \geq1-e^{-\Omega(m\delta^2-\log n)}$, all the direction transmissions of init, echo, [\readyone]~and [\readytwo]~type messages are successful. 
\end{lemma}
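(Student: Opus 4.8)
The plan is a short counting argument followed by Bernoulli's inequality, so I do not expect a genuine obstacle; the care is in the bookkeeping. First I would count the direction transmissions that take place between correct nodes in any execution of \pname{AR-Cast}. Inspecting Protocol~\ref{async:broadcast}: the unique sender issues one Send-to-all of [init] in Epoch~0, for at most $n$ [init] transmissions; each of the $n$ players performs at most one Send-to-all of [echo] (in Epoch~1), for at most $n^2$ [echo] transmissions; and each player performs at most one Send-to-all of a [\readyone]~or [\readytwo]~message (in Epoch~1 or Epoch~2, after which it proceeds to Epoch~3 and sends nothing further), for at most $n^2$ more transmissions. Hence at most $n+2n^2$ direction transmissions between correct nodes are relevant — exactly the exponent appearing in the statement.

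Second, I would justify combining the per-transmission success events multiplicatively. Since the classical and quantum channels between correct nodes are authenticated, the adversary can alter neither the transmitted qubits nor the classical tags accompanying them; it only observes public data and schedules delays. Fixing any adversarial strategy and any choice of the honest nodes' input directions, the outcome of each correct-to-correct direction estimate is governed by fresh quantum measurement randomness, so the $n+2n^2$ success events are independent, each of probability at least $\qsucc \ge 1 - e^{-\Omega(m\delta^2)}$. (The lemma's hypothesis is written $\qsucc \ge 1 - e^{-\Omega(m\delta)}$, which I take to be a typo for $m\delta^2$, consistent with Theorems~\ref{thm:aagree} and~\ref{thm:arc}.) Therefore the probability that all of them succeed is at least $\qsucc^{\,n+2n^2}$, and averaging over adversarial strategies preserves this lower bound.

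Finally I would convert the product into the stated exponential form via Bernoulli's inequality:
\begin{align}
\qsucc^{\,n+2n^2}\;\ge\;\bigl(1-e^{-\Omega(m\delta^2)}\bigr)^{n+2n^2}\;\ge\;1-(n+2n^2)\,e^{-\Omega(m\delta^2)}.
\end{align}
Since $n+2n^2\le 3n^2$, the subtracted term is $e^{-\Omega(m\delta^2)+\ln(3n^2)}=e^{-\Omega(m\delta^2-\log n)}$, which yields the claimed bound $1-e^{-\Omega(m\delta^2-\log n)}$. Equivalently, a union bound over the $n+2n^2$ transmissions gives the same final bound directly, bypassing the intermediate product. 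The only points requiring attention are getting the message counts right so that the exponent is $n+2n^2$ rather than something larger, and the remark in the second step that authentication is what licenses multiplying (or union-bounding) the per-transmission success probabilities in the presence of the adversary; beyond these the argument is routine.
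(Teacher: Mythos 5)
Your proposal is correct and follows essentially the same route as the paper's proof: count at most $n$ [init], $n^2$ [echo], and $n^2$ [\readyone]/[\readytwo]~transmissions, raise $\qsucc$ to the power $n+2n^2$, and apply Bernoulli's inequality to absorb the polynomial factor into the exponent as $-\Omega(m\delta^2-\log n)$. Your added remarks — that the hypothesis' $m\delta$ is a typo for $m\delta^2$, and that authentication is what licenses multiplying the per-transmission success probabilities — are sensible clarifications of points the paper leaves implicit, but they do not change the argument.
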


\begin{proof}
There are at most $n$ init messages, $n^2$ echo messages and $n^2$ [\readyone]~or [\readytwo]~type messages exchanged in the protocol. With probability at least $\qsucc^{n+2n^2}$  all of these transmissions which are between correct nodes are successful.
Now, 
\begin{align}
\qsucc^{n+2n^2} 	&\geq (1-e^{-\Omega(m\delta^2)})^{n+2n^2},\\
				\label{eq:lemprobBern}&\geq 1-(n+2n^2)e^{-\Omega(m\delta^2)},\\
				&\geq 1-e^{-\Omega(m\delta^2-\log n)}
\end{align}

Here inequality~\eqref{eq:lemprobBern} follows using Bernoulli's inequality, which is, $(1+x)^r \geq 1+rx$ for all real $x\geq -1$ and integer $r\geq 2$.

\end{proof}

We see that, theorem~\ref{thm:arc} follows from lemma~\ref{lem:acastterm1}, \ref{lem:acastcon}, \ref{lem:acastterm2}, ~\ref{lem:arcmeaning} and \ref{lem:acaprob}.

\subsection {Asynchronous Agreement}

So far we have presented an asynchronous broadcast protocol where a designated sender initiates the protocol with a direction. One major weakness of the protocol is that, if the sender is faulty then the protocol might never terminate, because in this case the correct nodes cannot decide whether the sender is faulty and not sending the [init] message, or correct but very slow. On the other hand, in an asynchronous reference frame agreement protocol the main goal is to allow the correct nodes to agree on some direction despite the presence of---up to a certain number of---unidentified faulty nodes in the network. This requires extra caution to make sure that the protocol eventually terminates. We show that our protocol~\ref{async:agreement} \pname{A-Agree} successfully solves this problem by proving theorem~\ref{thm:aagree}. We repeat the theorem here. 

\setcounter{theorem}{0}
\begin{theorem} 
In a complete network of $n$ nodes that are pairwise connected by public authenticated classical and quantum channels, if a bipartite $\delta$-estimate direction protocol that uses $m$ qubits to achieve success probability $\qsucc \geq 1-e^{-\Omega(m{\delta^2})}$ is used, then protocol \pname{A-Agree} is a $42 \delta$-asynchronous reference frame agreement protocol with success probability at least $1-e^{-\Omega(m{\delta^2}-\log n)}$, that can tolerate up to $t<n/4$ faulty nodes.
 \end{theorem}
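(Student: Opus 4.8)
The plan is to verify, for \pname{A-Agree}, the two conditions (Termination and Correctness with $\eta=42\delta$) of the definition of an $\eta$-asynchronous reference frame agreement protocol, using \pname{AR-Cast} as a black box satisfying Theorem~\ref{thm:arc} and using the interactive consistency routine \pname{Asynchronous-IC} of~\cite{BE03}. For $t<n/4<n/3$ the latter terminates in constant expected time and delivers, at every correct node, one and the same $n\times n$ bit matrix $b$ whose $j$-th row equals the private input $a_j$ of $P_j$ whenever $P_j$ is correct; the private classical channels it needs are obtained, as explained above, by first running pairwise \pname{2ED} and then QKD. I would condition the whole analysis on the event that all of the at most $n$ invocations of \pname{AR-Cast} behave as in Theorem~\ref{thm:arc}; since each fails with probability $e^{-\Omega(m\delta^2-\log n)}$, a union bound over the $n$ invocations keeps the total failure probability at $e^{-\Omega(m\delta^2-\log n)}$, which (for large enough $m$, as noted after the theorem statement) gives the claimed success probability.

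For \textbf{Termination}, I would trace the three epochs. Since $t<n/4$ forces $n-t\ge 3t+1$, there are at least $3t+1$ correct nodes, each of which runs \pname{AR-Cast} with a correct sender in Epoch~0; by Lemma~\ref{lem:acastterm1} every correct node eventually completes all of these, hence eventually accumulates $3t+1$ received directions and enters Epoch~1. In Epoch~1 all correct nodes run \pname{Asynchronous-IC}, which terminates and hands every correct node the common matrix $b$, so all correct nodes enter Epoch~2 and select \emph{the same} index $k$. The key sub-claim is that $k$ is well defined and ``live'': each correct $P_j$ left Epoch~0 only after $w_j$ held at least $3t+1$ non-$\perp$ entries, so each of the $\ge n-t$ correct rows of $b$ has at least $3t+1$ ones; a counting argument that uses $t<n/4$ then forces some column to carry at least $t+1$ ones, so the smallest such column $k$ exists, and since at most $t$ rows are faulty at least one of those $t+1$ ones sits in a correct row $P_j$, i.e.\ $a_j[k]=1$, i.e.\ correct $P_j$ has completed the \pname{AR-Cast} initiated by $P_k$. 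By Lemma~\ref{lem:acastterm2} every correct node then eventually completes that broadcast, so every correct node eventually passes the wait of Epoch~2, sets $v\leftarrow w_i[k]$, and outputs.

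For \textbf{Correctness}, all correct nodes output the direction they received from the \emph{same} broadcast — the \pname{AR-Cast} initiated by $P_k$ — and we just produced a correct node that completed it; hence Lemma~\ref{lem:acastcon} (Consistency of \pname{AR-Cast}) gives $d(w_i[k],w_{i'}[k])\le 42\delta$ for any two correct $P_i,P_{i'}$, which is exactly the required bound. Because \pname{AR-Cast} uses a fixed number of epochs, \pname{Asynchronous-IC} runs in constant expected time, and \pname{A-Agree} has no loops, the protocol also runs in constant expected time. The theorem then follows by combining these facts with Theorem~\ref{thm:arc} and the union bound above.

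The step I expect to be the main obstacle is the election argument in Epoch~2: one has to ensure simultaneously that a column with at least $t+1$ ones always exists (so that $k$ is defined and, thanks to the common $b$, agreed upon by all correct nodes) and that \emph{every} column with at least $t+1$ ones is endorsed by some correct node — the latter being what prevents a correct node from ever blocking on the possibly non-terminating broadcast of a faulty sender. The interplay between the ``$3t+1$ completions'' threshold of Epoch~0, the ``$t+1$'' threshold of Epoch~2, and the bound $t<n/4$ is precisely what makes both halves of this argument go through, and getting those counting inequalities right is the quantitative heart of the proof.
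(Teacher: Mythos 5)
Your proposal is correct and follows essentially the same route as the paper: condition on all $O(n)$ \pname{AR-Cast} invocations succeeding via a union/Bernoulli bound, show every correct node reaches Epoch~1 because $n-t\ge 3t+1$ broadcasts from correct senders terminate, use the common output of \pname{Asynchronous-IC} plus the $(3t+1)^2$-versus-$nt$ counting argument to elect a common column $k$ endorsed by at least one correct node, invoke the second termination property of \pname{AR-Cast} for liveness of the $k$-th broadcast, and invoke its Consistency property to get the $42\delta$ bound. This matches the paper's decomposition into Lemmas~\ref{lem:AAcomEP0}, \ref{aexistt}, \ref{lem:aanowait} and the final assembly.
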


There are three epochs in protocol~\ref{async:agreement}. Any correct node that successfully terminates must start at Epoch 0 and terminate at Epoch 3. At each Epoch the nodes inside it, and all the messages transmitted and received by the node while in that Epoch satisfies some invariance properties. We describe and prove these properties in the following lemmas. We first show that a correct node will eventually enter Epoch 1. 

\begin{lemma} \label{lem:AAcomEP0}
For $t<n/4$, all the correct nodes eventually enter Epoch~1 of \pname{A-Agreement} with probability at least  $\qsucc^{n^2+2n^3} \geq1-e^{-\Omega(m\delta^2-\log n)}$.
\end{lemma}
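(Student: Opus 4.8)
The plan is to reduce ``every correct node enters Epoch~1'' to Termination~1 of \pname{AR-Cast} (Lemma~\ref{lem:acastterm1}) applied to the broadcasts whose senders are correct, combined with a counting argument that guarantees there are enough such broadcasts. First I would note that in Epoch~0 of \pname{A-Agree} a correct node $P_i$ proceeds to Epoch~1 precisely once its array $w_i$ holds $(3t+1)$ directions received through \pname{AR-Cast}. Since $t<n/4$ we have $4t+1\le n$, hence the number of correct nodes is $n-t\ge 3t+1$, and each correct node acts as the sender of its own invocation \pname{AR-Cast}($u_i$); so there are at least $3t+1$ invocations of \pname{AR-Cast} with a correct sender.

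Next I would condition on the single event that every direction transmission between correct nodes succeeds, simultaneously in all $n$ parallel invocations of \pname{AR-Cast}. As recorded in Lemma~\ref{lem:acaprob}, a single invocation involves at most $n$ init, $n^2$ echo and $n^2$ \readyone/\readytwo directions, i.e.\ at most $n+2n^2$ direction transmissions; summing over the $n$ invocations gives at most $n(n+2n^2)=n^2+2n^3$ transmissions, so this event has probability at least $\qsucc^{n^2+2n^3}$. On this event the hypothesis of Lemma~\ref{lem:acastterm1} (Termination~1) is met for each of the $\ge 3t+1$ invocations with a correct sender, so every correct node eventually completes all of them, thereby receiving at least $3t+1$ directions; by the Epoch~0 rule it then enters Epoch~1.

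Finally, the quantitative bound is the same Bernoulli estimate as in Lemma~\ref{lem:acaprob}: using $\qsucc\ge 1-e^{-\Omega(m\delta^2)}$ and $(1+x)^r\ge 1+rx$ for $x\ge-1$, $r\ge 2$,
\begin{align}
\qsucc^{n^2+2n^3}&\ge\bigl(1-e^{-\Omega(m\delta^2)}\bigr)^{n^2+2n^3}\nonumber\\
&\ge 1-(n^2+2n^3)\,e^{-\Omega(m\delta^2)}\ge 1-e^{-\Omega(m\delta^2-\log n)}.
\end{align}

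The only genuinely delicate point is the counting step: one must be certain that $(3t+1)$ successful broadcasts are actually guaranteed to be available, which is exactly where the bound $t<n/4$ is used (we need $n-t\ge 3t+1$, whereas $t<n/3$ would not suffice here). Everything else is a direct application of properties of \pname{AR-Cast} already established, so I do not expect further obstacles.
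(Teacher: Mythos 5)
Your proposal is correct and follows essentially the same route as the paper's own proof: condition on all $n$ parallel \pname{AR-Cast} invocations succeeding (probability $\qsucc^{n^2+2n^3}$, via the $n+2n^2$ transmissions per invocation from Lemma~\ref{lem:acaprob}), use $t<n/4$ to guarantee at least $3t+1$ correct senders, and invoke Termination of \pname{AR-Cast} so that every correct node collects the $3t+1$ directions needed to leave Epoch~0. Your explicit remark that $t<n/3$ would not suffice for the counting step is a useful clarification the paper leaves implicit, but the argument itself is the same.
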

\begin{proof}
Each of the $n$ nodes has initiated an \pname{AR-Cast} in Epoch 0.  Each of the \pname{AR-Cast}s has a success probability at least $\qsucc^{n+2n^2}$.
So, with probability at least $\qsucc^{n^2+2n^3}$ all the \pname{AR-Cast}s from correct senders are successful. From Lemma~\ref{lem:acaprob} this is at least $1-e^{-\Omega(m\delta^2-\log n)}$.

As $t<n/4$, there are at least $(3t+1)$ correct nodes who initiates \pname{AR-Cast} as sender. According to Theorem~\ref{thm:arc} these $(3t+1)$ \pname{AR-Cast}s will eventually terminate. So, every correct receiver will eventually receive at least $(3t+1)$ directions and go to Epoch~1 with probability at least $\qsucc^{n^2+2n^3}$. 
\end{proof}

Each of the correct nodes stores the output of the \pname{Asynchronous-IC} protocol in an array $b_i$. Here $b_i$ can be seen as an $n\times n$ matrix of bits where row $j$ is received from node $j$. We can observe the following property of this matrix. 

\begin{lemma} \label{aexistt}
For $t<n/4$ and correct node $P_i$, after instruction~\ref{AAE19} of Epoch~1 of \pname{A-Agreement}, there exists a column in $b_i$ with at least $(t+1)$ $1$s in it.
\end{lemma}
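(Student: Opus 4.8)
The plan is to derive a lower bound on the total number of ones in the matrix $b_i$ and then apply the pigeonhole principle to its $n$ columns. The lower bound rests on two ingredients: every correct node feeds into \pname{Asynchronous-IC} a bit string with at least $(3t+1)$ ones, and the interactive consistency guarantee forces those strings to reappear verbatim as the corresponding rows of $b_i$ at every correct node.

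First I would establish that the string $a_j$ that a correct node $P_j$ supplies to \pname{Asynchronous-IC} has at least $(3t+1)$ ones. By the exit condition of Epoch~0 of \pname{A-Agree}, $P_j$ moves on to Epoch~1 only after the output directions of $(3t+1)$ \pname{AR-Cast}s have been delivered to it; hence when it builds $a_j$ in instruction~\ref{AAE1a} the array $w_j$ already has at least $(3t+1)$ entries different from $\perp$, and each such entry sets the matching bit of $a_j$ to $1$.

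Next I would appeal to the properties of \pname{Asynchronous-IC}~\cite{BE03}: after instruction~\ref{AAE19} every correct node holds the same $n\times n$ bit matrix $b_i$, and for each correct node belonging to the agreed ``core'' the corresponding row of $b_i$ equals that node's input; since the core contains at least $n-t$ nodes while at most $t$ of them are faulty, at least $n-2t$ \emph{distinct} rows of $b_i$ are authentic inputs of correct nodes, each carrying at least $(3t+1)$ ones by the previous paragraph. Treating any row outside the core as all-zeros for the sake of a lower bound, the total number of ones in $b_i$ is therefore at least $(n-2t)(3t+1)$.

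Finally, suppose for contradiction that every column of $b_i$ contained at most $t$ ones; then $b_i$ would have at most $tn$ ones altogether. But $t<n/4$ means $n\geq 4t+1$, and a short calculation (it reduces to $n(2t+1)>2t(3t+1)$) gives $(n-2t)(3t+1)>tn$, contradicting the count of the previous paragraph. Hence some column of $b_i$ carries at least $(t+1)$ ones, as claimed. The one step that requires care is the invocation of the asynchronous interactive consistency guarantee — one must state precisely that its output is common to all correct nodes and reproduces the inputs of a core of at least $n-t$ nodes — and the extraction from it of the clean fact that at least $n-2t$ rows of $b_i$ are genuine correct-node inputs; everything else is arithmetic.
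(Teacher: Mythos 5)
Your proof is correct and follows essentially the same route as the paper: lower-bound the total number of ones in $b_i$ via the fact that every correct node enters Epoch~1 with at least $(3t+1)$ completed \pname{AR-Cast}s, then derive a contradiction by pigeonhole if every column had at most $t$ ones. The only difference is that you conservatively count just $n-2t$ rows as authentic correct inputs (allowing for an asynchronous core of only $n-t$ nodes) and verify $(n-2t)(3t+1)>tn$, whereas the paper credits all $n-t\geq 3t+1$ correct rows and compares $(3t+1)^2$ against $(4t+1)t$; both counts suffice, and your treatment of the interactive-consistency guarantee is if anything the more careful one.
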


\begin{proof}
We show this by a counting argument. Note that a correct node arrives at Epoch~1 only after it have received at least $(3t+1)$ directions from other players. As a result after step~\ref{AAE1a} of Epoch~1 $a_i$ contains at least $(3t+1)$ 1's. These $a_i$'s become the rows of $b_i$ after step~\ref{AAE19}. There are at most $t$ faulty nodes. So, at least $(3t+1)$ rows of $b_i$ are originated from correct nodes. Each of these rows must contain at least $(3t+1)$ $1$'s. So $b_i$ has at least $(3t+1)^2$ $1$s.

However, if no column had at least $(t+1)$ $1$s, then there would be at most $(4t+1)\times t$  $1$s in $b_i$. This contradicts the fact that $b_i$ has at least $(3t+1)^2$ $1$s. So, there must exist a column with at least $(t+1)$ $1$s in it.
\end{proof}

We show that all the correct nodes select the same column which has at least $t+1$ $1$s in it. 

\begin{lemma}
After instruction~\ref{AAE2scan} of Epoch~2 of \pname{A-Agreement}, if correct node $P_i$ has $k_i$ and correct node $P_j$ has $k_j$, then $k_i = k_j$.
\end{lemma}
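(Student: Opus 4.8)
The plan is to reduce this to the consistency guarantee of the interactive consistency subroutine. First I would recall that, by instruction~\ref{AAE19}, the matrix $b_i$ is nothing but the collection of output vectors of \pname{Asynchronous-IC}, with row $j$ equal to the element attributed to $P_j$. By property~1 of the Interactive Consistency Problem (which \pname{Asynchronous-IC} solves for $t<n/3$, hence in particular for $t<n/4$), every correct node computes \emph{exactly the same} output vector. Therefore $b_i = b_j$ for every pair of correct nodes $P_i,P_j$; call this common $n\times n$ bit matrix $b$.

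Next I would observe that the quantity $k_i$ computed in instruction~\ref{AAE2scan} is a deterministic function of $b_i$ alone: by definition $k_i$ is the \emph{smallest} column index of $b_i$ whose column contains at least $(t+1)$ ones. This rule uses no private input and no local randomness of $P_i$, so applying it to the common matrix $b$ yields one and the same value regardless of which correct node performs the computation. Hence $k_i = k_j$.

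The only thing that needs to be checked for the statement to be meaningful is that $k_i$ is well-defined, i.e.\ that at least one column of $b$ carries $(t+1)$ ones; but this is exactly Lemma~\ref{aexistt}, which applies to every correct node and hence to $b$. Combining Lemma~\ref{aexistt} (existence of a qualifying column) with the determinism of the selection rule and the consistency of \pname{Asynchronous-IC} (equality of the $b_i$'s) gives $k_i = k_j$, as claimed.

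I do not expect a genuine obstacle here: the lemma is essentially a restatement of the agreement property of \pname{Asynchronous-IC}. The only subtlety worth a sentence in the write-up is to make explicit that the ``lowest index'' tie-breaking rule makes the choice canonical, so that identical inputs $b_i=b_j$ force identical outputs $k_i=k_j$; without such a canonical rule two correct nodes could in principle pick different qualifying columns.
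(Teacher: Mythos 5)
Your proposal is correct and follows essentially the same route as the paper: invoke the agreement property of \pname{Asynchronous-IC} to get $b_i=b_j$, use Lemma~\ref{aexistt} for existence of a qualifying column, and note that the smallest-index selection rule is a deterministic function of the common matrix. Your explicit remark that the tie-breaking rule is what makes the choice canonical is a welcome clarification, but it adds nothing beyond what the paper's argument already relies on.
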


\begin{proof}
After completion of protocol \pname{Asynchronous-IC} in Epoch~1, all the correct nodes compute the same output vector. That is,  $b_i = b_j$ for all correct $P_i$ and $P_j$. Also, from lemma~\ref{aexistt} we know there exists a column in $b_i$ with at least $(t+1)$ 1s. So, in Epoch~2 step~\ref{AAE2scan} when correct node $P_i$ and $P_j$ selects $k_i$ and $k_j$ to be the chronologically smallest column index that has at least $(t+1)$ $1$s. They select the same column. i.e.,  $k_i = k_j$.
\end{proof}

Now that every correct node $P_i$ agrees on a column $k_i$ of $b_i$, we observe that.

\begin{lemma} \label{lem:aanowait}
If a correct node $P_i$ selects $k_i$ in instruction~\ref{AAE2scan} of Epoch~2, then the \pname{AR-Cast} initiated by $P_{k_i}$ in Epoch~0 eventually completes successfully.
\end{lemma}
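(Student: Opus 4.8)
The plan is to trace the lone "$1$" in column $k_i$ back to an honest completion of the broadcast, and then invoke the second termination property of \pname{AR-Cast}. First I would recall what $k_i$ is: by instruction~\ref{AAE2scan} it is (the chronologically smallest) index of a column of the matrix $b_i$ containing at least $(t+1)$ ones, a column whose existence is guaranteed by Lemma~\ref{aexistt}. Since at most $t$ of the $n$ rows of $b_i$ originate from faulty nodes, a pigeonhole argument forces at least one of those $(t+1)$ ones to sit in a row $j$ contributed by a correct node $P_j$. By the correctness guarantee of \pname{Asynchronous-IC} — the output entry associated with a correct node equals that node's private input — row $j$ of $b_i$ is exactly the string $a_j$ that $P_j$ fed into \pname{Asynchronous-IC} in instruction~\ref{AAE19} of Epoch~1; hence $a_j[k_i]=1$.

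Next I would unfold the definition of $a_j$. In instruction~\ref{AAE1a} of Epoch~1 the correct node $P_j$ sets $a_j[k_i]=1$ only if $w_j[k_i]\neq\perp$, i.e. only if, by the time $P_j$ left Epoch~0, it had already received a direction from the \pname{AR-Cast} that $P_{k_i}$ initiated in instruction~\ref{aarunacast}. In other words, the correct node $P_j$ has itself completed the \pname{AR-Cast} initiated by $P_{k_i}$.

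Finally I would apply Lemma~\ref{lem:acastterm2} (Termination~2 of \pname{AR-Cast}): once some correct node completes an instance of \pname{AR-Cast}, every correct node eventually completes that instance. Since $P_j$ is correct and has completed the \pname{AR-Cast} of $P_{k_i}$, this instance eventually completes for all correct nodes; conditioning on the high-probability event of Lemma~\ref{lem:acaprob} that all two-node direction transmissions among correct nodes succeed, the completion is successful. The argument is deliberately insensitive to whether $P_{k_i}$ itself is correct — the designated sender being faulty is precisely the situation Termination~2 is designed to cover.

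The one real subtlety, and the step I would be most careful with, is the first: one must be certain that a "$1$" appearing in a \emph{correct} node's row of $b_i$ genuinely certifies a completed broadcast, which hinges on chaining the correctness of \pname{Asynchronous-IC} with the semantics of the bookkeeping string $a_j$. A faulty node's row could of course carry a spurious $1$, but the $(t+1)$-threshold baked into the choice of $k_i$ is exactly what guarantees at least one honest witness, so no spurious entry can mislead the correct nodes here.
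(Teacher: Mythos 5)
Your proof is correct and follows essentially the same route as the paper's: a pigeonhole argument on the $(t+1)$ ones in column $k_i$ produces an honest witness $P_j$ that has already completed the \pname{AR-Cast} initiated by $P_{k_i}$, after which the termination property of \pname{AR-Cast} finishes the job. Your version is if anything slightly more careful than the paper's, since you explicitly chain through the correctness guarantee of \pname{Asynchronous-IC} and the semantics of the bookkeeping string $a_j$, which the paper asserts only informally.
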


\begin{proof}
We show this by showing that at least one correct node has completed the \pname{AR-Cast} initiated by $P_{k_i}$. Then the lemma follows from the termination condition of \pname{AR-Cast}.

Each row $b_i[j]$ represents $P_i$'s knowledge of which \pname{AR-Cast}s are successfully received by $P_j$. For example, if $b_i[j][l] = 1$, then it means node $P_j$ has reported to $P_i$ that it has completed the \pname{AR-Cast} initiated by node $P_l$ in Epoch~0. If there are at least $(t+1)$ $1$s in the $k_i$th column of $b_i$, it means that there are $(t+1)$ nodes who report that they have received the \pname{AR-Cast} initiated by node $P_{k_i}$ in Epoch~0. At least one of these reports is from a correct node. So, from the termination condition of \pname{AR-Cast} (Lemma~\ref{lem:acastcon}) all the correct nodes eventually successfully complete the \pname{AR-Cast} by $P_k$.  
\end{proof}

Now we are ready to prove {\bf theorem~\ref{thm:aagree}.}
\begin{proof}\
There are at most $n$ \pname{AR-Cast}s initiated in Epoch~0 of which $(n-t)$ are by correct nodes. From lemma~\ref{lem:acaprob} each of these succeeds with probability $\qsucc^{n+2n^2} \geq 1-e^{-\Omega(m\delta^2-\log n)}$. So all the correct \pname{AR-Casts} succeed with, 
\begin{align}
\qsucc^{n^2+2n^3} &\geq(1-e^{-\Omega(m\delta^2-\log n)})^n,\\
&\geq 1-e^{-\Omega(m\delta^2-\log n)}. \label{eq:aabern2}
\end{align}
Here inequality~\eqref{eq:aabern2} follows from Bernoulli's inequality. Conditioned on this we show,
\paragraph{Correctness.} 
To prove consistency we show that if a correct node $P_i$ outputs $v_i$ and a correct node $P_j$ outputs $v_j$ then $d(v_i,v_j) \leq 42\delta$. 
From step~\ref{AAE2asg} of Epoch~2 of \pname{A-Agree} we see that, 

\begin{align}
\label{eq:vwcasti}v_i = w_i[k_i],\\ 
\label{eq:vwcastj}v_j = w_j[k_j].
\end{align}
From lemma~\ref{lem:acastcon} we know that for $t<n/4$, 
\begin{align}
d(w_i[k_i],w_j[k_j]) \leq 42\delta. 
\end{align}
This with~\eqref{eq:vwcasti} and~\eqref{eq:vwcastj} gives, 
\begin{align}
d(v_i,v_j) \leq 42\delta. 
\end{align}

\paragraph{Termination} To prove termination we have to show that every correct node $P_i$ terminates with an output direction $v_i$.

To prove this we show that $P_i$ eventually completes all the Epochs of \pname{A-Agree}. From Lemma~\ref{lem:AAcomEP0} we see that $P_i$ must enter Epoch~1 from Epoch~0. All the steps in Epoch~1 are of constant expected time. So, a correct node will eventually complete them and go to Epoch~2. Only in step~\ref{AAE2wait} of Epoch~2 $P_i$ waits for completion of \pname{AR-Cast} from $P_{k_i}$. However, from Lemma~\ref{lem:aanowait}
we know that this \pname{AR-Cast} eventually successfully completes. All the other incomplete \pname{AR-Cast}s are then aborted at Step~\ref{AAE2abort} and the protocol terminates with output $v_i$.

\end{proof}



\end{document}